\documentclass[12pt, onecolumn]{IEEEtran}

\usepackage{cite}
\usepackage{pslatex}
\usepackage{moreverb}
\usepackage{epsfig}
\usepackage{amsmath}
\usepackage{amssymb}
\usepackage{bm}
\usepackage{booktabs}
\usepackage{array}
\usepackage{multirow}
\usepackage{threeparttable}
\usepackage{url}
\usepackage{dsfont}
\usepackage{subfigure}
\usepackage{midfloat}
\usepackage{color}
\usepackage{graphicx,dblfloatfix}


\newtheorem{lemma}{Lemma}
\newtheorem{thm}{Theorem}
\newtheorem{definition}{Definition}
\newtheorem{corollary}{Corollary}

\def\bc{{\mathbf c}}

\def\bw{{\mathbf w}}

\def\by{{\mathbf y}}

\def\bH{{\mathbf H}}

\def\b0{{\mathbf 0}}

\DeclareMathAlphabet{\mathpzc}{OT1}{pzc}{m}{it}

\renewcommand{\baselinestretch}{1.785}
\begin{document}
\vspace{-6mm}
\title{Multi-user lattice coding for the multiple-access relay channel}

\author{
    \IEEEauthorblockN{Chung-Pi Lee, Shih-Chun Lin, Hsuan-Jung Su and H. Vincent Poor}\\
\thanks{C.-P. Lee and H.-J. Su are with Department of Electrical Engineering and Graduate Institute of Communication Engineering,
National Taiwan University, Taipei, Taiwan, 10617. S.-C. Lin is with
the Department of Electronic and Computer Engineering, National
Taiwan University of Science and Technology, Taipei, Taiwan, 10607.
H. V. Poor is with the Department of Electrical Engineering,
Princeton University, Princeton, NJ, 08544, USA. Emails:
\{D96942016@ntu.edu.tw, sclin@mail.ntust.edu.tw,
hjsu@cc.ee.ntu.edu.tw, poor@princeton.edu\}. }}

 \maketitle \thispagestyle{empty}
 \vspace{-20mm}
\begin{abstract}
 \vspace{-2mm}
This paper considers the multi-antenna multiple access relay channel
(MARC), in which multiple users transmit messages to a common
destination with the assistance of a relay. In a variety of MARC
settings, the dynamic decode and forward (DDF) protocol is very
useful due to its outstanding rate performance. However, the lack of
good structured codebooks so far hinders practical applications of
DDF for MARC. In this work, two classes of structured MARC codes are
proposed: 1) one-to-one relay-mapper aided multiuser lattice coding
(O-MLC), and 2) modulo-sum relay-mapper aided multiuser lattice
coding (MS-MLC). The former enjoys better rate performance,
 while the latter provides more flexibility to
tradeoff  between the complexity of the relay mapper and the rate
performance. It is shown that, in order to approach the rate
performance achievable by  an unstructured codebook with
maximum-likelihood decoding,  it is crucial to use a new $K$-stage
coset decoder for structured O-MLC, instead of the one-stage decoder
proposed  in previous works. However, if O-MLC is decoded with the
one-stage decoder only, it can still achieve the optimal DDF
diversity-multiplexing gain tradeoff in the high signal-to-noise
ratio regime. As for MS-MLC, its rate performance can approach that
of the O-MLC by increasing the complexity of the modulo-sum
relay-mapper. Finally, for practical implementations of both O-MLC
and MS-MLC,  practical short length lattice codes with linear
mappers are designed, which facilitate efficient lattice decoding.
Simulation results show that the proposed coding schemes outperform
existing schemes in terms of outage probabilities in a variety of
channel settings.
\end{abstract}


\vspace{-5mm}
\section{Introduction}
\vspace{-2mm} In recent years, cooperative communication has drawn a
significant amount of interest as a means of providing spatial
diversity  when time, frequency or antenna diversities are
unavailable due to delay, bandwidth or terminal size constraints,
respectively. Cooperative communication techniques for single-source
networks have been extensively studied in terms of rate, outage
probability or diversity-multiplexing tradeoff (DMT) perspectives
\cite{funo}\cite{cooperaive_outage}\cite{on_the_achievableDMT}.
However, practical communication networks usually involve more than
one source (user), leading to the study of   the multiple-access
channel (MAC).
In this paper, we consider an important multi-user cooperative
communication channel, that is, the multi-antenna multiple-access
relay channel (MARC). The MARC is a MAC with an additional shared
half-duplex relay\cite{MARCfirst}.
 It has been shown
that the MARC provides a much larger achievable rate region
\cite{MARCfirst} and diversity gain per user \cite{case_MAC},
compared to those of the MAC.
Also, since a single relay is shared by multiple users in the MARC,
the extra cost of adding such a relay is acceptable. However, the
code design for the MARC needs to jointly consider the codebooks of
the multiple users and the relay
\cite{MARCfirst}\cite{Capcity_MARC2}\cite{cooperaive_capacity}, and
is thus not a trivial extension of those for the single-user relay
channel or the multiple access channel.

The achievable rate region of the MARC has been characterized in
\cite{MARCfirst} \cite{Capcity_MARC2} and
\cite{cooperaive_capacity}. The decode and forward protocol, which
is a special case of the dynamic decode and forward (DDF) protocol
\cite{optimality_ARQ}, was shown to achieve the capacity region of
the  MARC when the source-relay link is good enough{
\cite{cooperaive_capacity}},  thus having a larger achievable rate
region than those of the multiple-access amplify and forward (MAF)
\cite{case_MAC} and
 compress and forward (CF) protocols \cite{cooperative_wirelss}. However, the
capacity region of the general MARC remains unknown. The DMT for the
MARC with single antenna nodes was studied in \cite{case_MAC}
\cite{optimality_ARQ} and \cite{cooperative_wirelss}. Although the
MAF and CF are both DMT optimal in the high multiplexing gain regime
\cite{case_MAC}\cite{cooperative_wirelss},  compared with the DDF
strategy, they both achieve lower diversity gains in the low to
medium multiplexing gain regimes
\cite{case_MAC}\cite{cooperative_wirelss}.
Moreover, in \cite{case_MAC}, simulation results show that the DDF
protocol yields a better outage probability than that of MAF and CF
over a large range of  signal-to-noise ratio (SNR), even at the high
multiplexing gain regime. Thus we focus on the DDF in this paper due
to its good performance in a variety of operation settings.

However, previous results in
\cite{MARCfirst,Capcity_MARC2,cooperaive_capacity,case_MAC,optimality_ARQ,cooperative_wirelss}
are based on \textit{unstructured} random codebooks and maximum
likelihood (ML) decoders, and are very difficult to implement in
practice. In this paper, we propose \textit{structured} multiuser
lattice coding aided by a relay mapper for the MARC under the DDF
protocol, in which each node in the MARC has multiple antennas. To
simplify the joint codebook design problem for the multiple users
and the relay, we introduce a relay mapper which selects the
codeword to be transmitted at the relay to aid the users'
transmissions. The relay mapper is a key new ingredient for our
coding design, which can also help  implement the unstructured
codebooks in \cite{MARCfirst,Capcity_MARC2,cooperaive_capacity} and
\cite{optimality_ARQ} in practice, and does not appear in
\cite{MARCfirst,Capcity_MARC2,cooperaive_capacity,case_MAC,optimality_ARQ,cooperative_wirelss}.
However, the introduction of the relay mapper makes the decoding
much more difficult than that for the MAC \cite{latticemac}. We will
see that the one-stage  coset decoding  proposed in
\cite{latticemac} fails to achieve the rate performance of the
unstructured codebook with the ML decoding demonstrated in
\cite{cooperaive_capacity}. Instead, we propose a new $K$-stage
coset decoder that achieves the rate performance in
\cite{cooperaive_capacity} by successive cancellation on the
multiuser decoding tree. Two classes of relay mapper aided multiuser
lattice coding are proposed: 1) one-to-one relay mapper aided
multiuser lattice coding (O-MLC), and 2) modulo-sum relay mapper
aided multiuser lattice coding (MS-MLC). The first  enjoys better
rate performance while the second  provides more flexibility to
tradeoff between the complexity of the relay mapper and the rate
performance.  With the $K$-stage coset decoder, the structured O-MLC
can achieve the rate performance obtained by the unstructured
codebook in \cite{cooperaive_capacity}. If only one-stage coset
decoding is used, we also show that O-MLC is DMT optimal for the
DDF, and has better DMT than that in \cite{case_MAC} and
\cite{cooperative_wirelss} for the low to medium multiplexing gain
regime. As for MS-MLC, when the codomain size of the modulo-sum
relay mapper becomes larger, the error performance of  MS-MLC
approaches that of  O-MLC. Moreover, our decoder is no longer a
simple lattice decoder  as that of \cite{latticemac}, since the
lattice structure for decoding may be destroyed by the relay mapper.
Further, a naive application of the theoretical error analysis in
\cite{latticemac} suffers from significant losses in prediction of
the achievable rates of proposed coding. { We overcome this problem
by introducing a new technique for bounding the error probability
over the random relay-mapper codebook ensemble.} Finally, to
implement our theoretical results, we construct  practical lattice
codebooks with \emph{linear} mappings for both O-MLC and MS-MLC,
which enable the decoder to use the efficient lattice decoding
algorithms in \cite{aunified} and \cite{onmax}.

Compared with codes appearing in previous  works
\cite{MARCfirst,Capcity_MARC2,cooperaive_capacity,optimality_ARQ,cooperative_wirelss}
which are difficult to  implement, our structured MARC coding can be
implemented in practice as we  will see below. Some practical MARC
code designs were proposed in \cite{MARCnetowrk2} and
\cite{complexnetwork}, but these studies lack  theoretical
performance analysis. In \cite{MARCnetowrk2} and
\cite{complexnetwork}, an orthogonal protocol was used in which
users and the relay must transmitted in different time slots to
avoid interference, while our scheme allows them to transmit
simultaneously. Moreover, in \cite{complexnetwork}, instead of joint
code design,  the relay's transmitted symbol is formed  from the
users' symbols with a simple transformation.
 Due to the above reasons, there
are significant losses in the achievable rates and DMTs for the
methods in \cite{MARCnetowrk2} and \cite{complexnetwork}, compared
with our schemes. In simulations, we show that our proposed lattice
coding schemes also outperform the schemes in \cite{case_MAC}
\cite{cooperative_wirelss} \cite{MARCnetowrk2} and
\cite{complexnetwork} in terms of outage probabilities.

 The rest of this paper is
organized as follows. Section \ref{sec_systemmodel} introduces the
system model and some frequently used notation is summarized in
Table \ref{notations_table}. In Section \ref{proposed mapper}, O-MLC
and  MS-MLC are introduced. In Section \ref{sec.Error_Analysis}, we
establish the achievable rate region for both O-MLC and MS-MLC and
show that  O-MLC is DMT optimal. In Section \ref{Simulation},
simulation results are presented, and Section \ref{sec_conclusion}
concludes the paper.

\vspace{-4mm}
\section{System Model} \label{sec_systemmodel}
\vspace{-2mm} We consider the $K$-user multiple-antenna MARC as
shown in Fig. \ref{fig.1},
\begin{figure}[t] 
\centering
\includegraphics[scale=0.8]{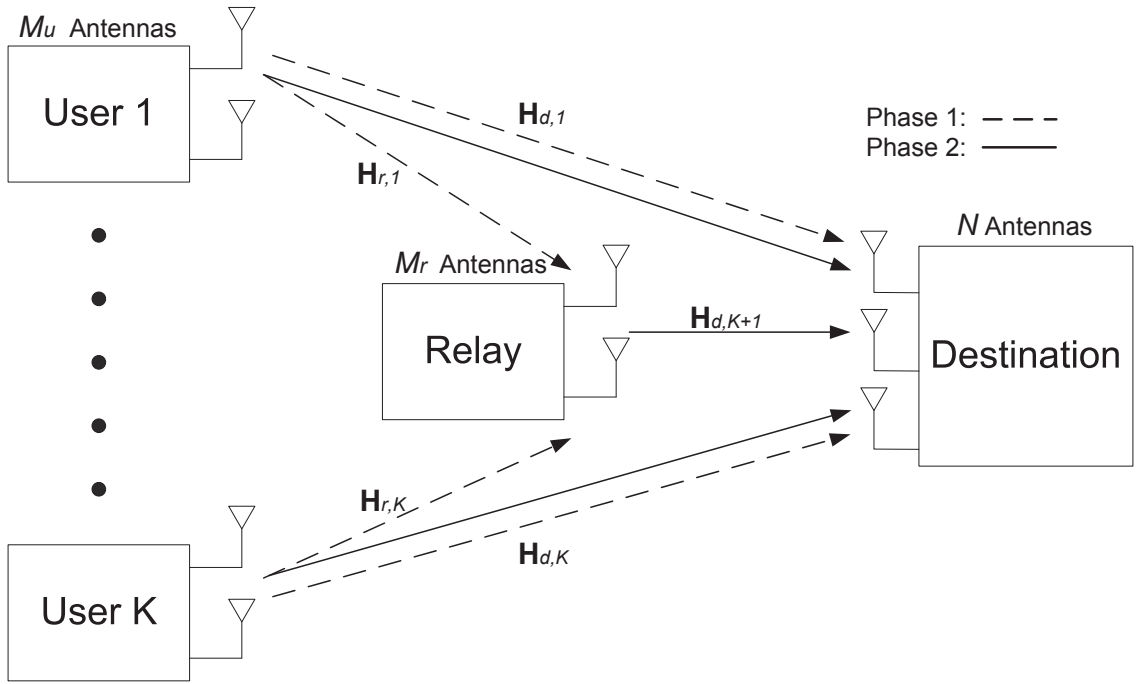}
\caption{ Dynamic decode and forward (DDF) for the $K$-user multiple-antenna  multiple-access relay channel  (MARC), where Phase $1$ is the relay's listening phase while Phase $2$ is the relay's transmitting phase.} 
\label{fig.1}
\end{figure}
in which a relay node is assigned to
assist the multiple-access users in transmitting data to a common
destination. Each user and the relay  is equipped  with $ M_{u}$ and
$M_{r}$  antennas, respectively, and the destination has $N$
antennas. In the DDF for MARC, each codeword spans $L$ slots each
consisting of $T$ vector symbols, and the block of $LT$ vector
symbols  is split into two phases due to the half-duplex constraint
at the relay node (i.e., it cannot transmit and receive
simultaneously).  In  Phase $1$, the relay receives the signals from
the users, then it tries to decode the users' messages until the
decision time $\ell_1T$. Following \cite{optimality_ARQ}, $\ell_1T$
is chosen to be the earliest time index such that after $\ell_1T$
symbols, the relay can decode the users' messages without error. If
there is no such $\ell_{1} \in \{1,...,L-1\}$, the relay remains
silent. Let the $M_{r}\times M_{u}$, $N \times M_{u}$ channel
matrices from user $i$ to the relay and the destination be
$\mathbf{H}_{r,i}$ and $\mathbf{H}_{d,i}$, respectively, which are
perfectly known at the corresponding receivers.  For  Phase $1$, the
received $M_{r} \times 1$ vector of symbols at the relay
is\footnote{\underline{Notation}: Let $A$ be a set, then $A^{*}=
A\setminus \{\mathbf{0}\}$. $A^{c}$ denotes the complement of $A$,
and $|A|$ denotes the cardinality of  $A$. For a matrix
$\mathbf{M}$, $\mathbf{M}^{H}$ is the
 conjugate transpose and $|\mathbf{M}|$ is the determinant. We use $\log(\cdot)$ for the logarithm with base 2, and
$\times$ for the direct product. An $n$-dimensional real lattice
${\Lambda}$ is a discrete additive subgroup of $\mathbb{R}^{n}$. The
lattice quantization function is defined as $
Q_{\Lambda}(\mathbf{y})\triangleq \arg \min_{\lambda \in
\Lambda}|\mathbf{y}-\mathbf{\lambda}| $ for
$\mathbf{y}\in\mathbb{R}^{n}$, and the modulo-lattice operation $
\bar{\by}=\mathbf{y}\mod \Lambda \triangleq
\mathbf{y}-Q_{\Lambda}(\mathbf{y})$ \cite{nested_lattice}.
 The second-order moment of $\Lambda$ is
defined as $
\sigma^{2}(\Lambda)\triangleq\frac{1}{nV_{f}(\Lambda)}\int_{\mathcal{V}_{\Lambda}}\mathbf{x}^{2}d\mathbf{x}
$, where $\mathcal{V}_{\Lambda}$ and $V_{f}(\Lambda)$ are given in
(T1.2) and (T1.3) in Table \ref{notations_table}, respectively. Some
other frequently used notation is also summarized in Table
\ref{notations_table}. }
\begin{equation}
\mathbf{y}_{r,l}=\sqrt{\frac{\rho_r}{M_{u}}}\sum^{K}_{i=1}\mathbf{H}_{r,i}\mathbf{x}_{i,l}+\mathbf{n}_{l},
\;\;\;\;\; l=1,2,...,\ell_1T \label{eqd.2}
\end{equation}
where  $\rho_r$ is the received SNR at the relay, $\mathbf{x}_{i,l}$
is the $M_{u} \times 1$ vector signal transmitted by user $i$ at
time index $l$, and the noise at the relay $\mathbf{n}_{l}\sim
\mathcal{CN}(\mathbf{0},\mathbf{I}_{M_{r}})$ is a Gaussian vector
with independent and identically distributed (i.i.d.) entries.
Similar to \eqref{eqd.2}, the received vector symbols at the
destination in Phase $1$ is
\begin{equation}
\mathbf{y}_{d_1,l}=\sqrt{\frac{\rho_d}{M_{u}}}\sum^{K}_{i=1}\mathbf{H}_{d,i}\mathbf{x}_{i,l}+\mathbf{v}_{l},
\;\;\;\;\; l=1,2,...,\ell_1T \label{eqd.3}
\end{equation}
where $\rho_d$ is the received SNR  and $\mathbf{v}_{l}\sim
\mathcal{CN}(\mathbf{0},\mathbf{I}_{N})$ is the noise vector at the
destination. In  Phase $2$ of DDF, based on the decoded messages
obtained at the decision time $\ell_{1}T$, the relay transmits the
corresponding coded vector symbols to the destination. The signal
received by the destination is then
\begin{equation}
\mathbf{y}_{d_2,l}=\sqrt{\frac{\rho_d}{M_{u}}}\sum^{K}_{i=1}\mathbf{H}_{d,i}\mathbf{x}_{i,l}+\sqrt{\frac{\rho_d}{M_{r}}}\mathbf{H}_{d,K+1}\mathbf{x}_{K+1,l}+\mathbf{v}_{l},
\;\;\;\;\; l=\ell_1T+1,\ell_1T+2,...,LT \label{eqd.4}
\end{equation}
where $\mathbf{x}_{K+1,l}$ denotes for the signal transmitted by the
relay and $\mathbf{H}_{d,K+1}$ is the channel matrix from the relay
to destination. As for the (normalized) MARC input power constraint,
it is imposed on each user and the relay as \vspace{-4mm}
\begin{equation}
E\left[\frac{1}{LT}\sum_{l=1}^{LT}|\mathbf{x}_{i,l}|^{2}\right]\leq
M_{u},
\;\;\;\;\;E\left[\frac{1}{LT}\sum_{l=1}^{LT}|\mathbf{x}_{K+1,l}|^{2}\right]\leq
M_{r}, \;\;\;\; i=1,...,K \label{eqd.5a}
\end{equation}
where the expectation $E[\;]$ is taken over all codewords in the
codebook.

To simplify the presentation for the proposed lattice coding scheme,
it is useful to transform our received signal model (\ref{eqd.2}),
(\ref{eqd.3}) and (\ref{eqd.4}) into the equivalent real channel
model form as in (\ref{eq_y_relay}) and (\ref{eqYdst}), for the
relay and the destination, respectively, \vspace{-4mm}
\begin{align}
\mathbf{y}_{relay}&=\mathbf{H}_{relay}\mathbf{x}_{relay}+\mathbf{n}_{relay}
\label{eq_y_relay}\\
 \mathbf{y}_{dst}&=\mathbf{H}_{dst}\mathbf{x}_{dst}+
\mathbf{n}_{dst}. \label{eqYdst}
\end{align}
The equivalent channel for the destination \eqref{eqYdst} is
formed by concatenating the received signal (\ref{eqd.3}) in Phase
$1$ and (\ref{eqd.4}) in Phase $2$, and the $2(KM_{u}+M_{r})LT
\times 1$ super signal vector $\mathbf{x}_{dst}$ in \eqref{eqYdst}
is
\begin{equation} \label{eq_x_dst}
\mathbf{x}_{dst}\triangleq\left[\mathbf{x}_{1}^{T},...,\mathbf{x}_{K+1}^{T}\right]^{T},
\end{equation}
where $\mathbf{x}_{i}=\left[\{\mathbf{x}^R_{i,1}\}^{T}, ...,
\{\mathbf{x}^R_{i,LT}\}^{T}\right]^{T}$ with
$\mathbf{x}^R_{i,l}=\left[\text{Re}\{\mathbf{x}_{i,l}\}^{T},\text{Im}\{\mathbf{x}_{i,l}\}^{T}\right]^{T}$;
while the $2NLT \times 1$ super received signal and noise at the
destination $\mathbf{y}_{dst}$ and $\mathbf{n}_{dst}$ in
\eqref{eqYdst} are similarly defined respectively. The $2NLT \times
2(KM_{u}+M_{r})LT$ super-channel matrix $\mathbf{H}_{dst}$ in
\eqref{eqYdst} is $\mathbf{H}_{dst}\triangleq
\left[\mathbf{H}^d_{1},...,\mathbf{H}^d_{K+1}\right]$, where the
$2NLT \times 2M_{u}LT$ equivalent channel matrix $\mathbf{H}^d_{i}$
for user $i$ comes from \eqref{eqd.3} as
\begin{equation}
\mathbf{H}^d_{i}\triangleq\sqrt{\frac{\rho_d}{M_{u}}}\mathbf{I}_{LT}\otimes
\begin{pmatrix}
\text{Re}\{\mathbf{H}_{d,i}\} & -\text{Im}\{\mathbf{H}_{d,i}\} \\
\text{Im}\{\mathbf{H}_{d,i}\} & \text{Re}\{\mathbf{H}_{d,i}\}
\end{pmatrix} \label{eqd.7}
\end{equation}
where $\otimes$ denotes the Kronecker product and $i=1,...,K$,
while the equivalent channel matrix $\mathbf{H}_{K+1}$ for the
relay comes from \eqref{eqd.4} as
\begin{equation}
\mathbf{H}^{d}_{K+1}\triangleq diag\left(
\mathbf{I}_{\ell_1T}\otimes \mathbf{0}_{2N\times2M_{r}},
\sqrt{\frac{\rho_d}{M_{r}}}\mathbf{I}_{(L-\ell_1)T}\otimes
\begin{pmatrix}
\text{Re}\{\mathbf{H}_{d,K+1}\} & -\text{Im}\{\mathbf{H}_{d,K+1}\} \\
\text{Im}\{\mathbf{H}_{d,K+1}\} & \text{Re}\{\mathbf{H}_{d,K+1}\}
\end{pmatrix}\right),
\label{eq.relaychannel}
\end{equation}
if $1\leq\ell_{1}\leq L-1$, where the first
$2N\ell_{1}T\times2M_{r}\ell_{1}T$ is a zero matrix because the
relay is  listening in Phase $1$ (if $\ell_{1}=L,$
$\mathbf{H}^d_{K+1}\triangleq \mathbf{0}_{2NLT\times2M_{r}LT}$ since
the relay is silent). As for the equivalent channel for the relay
(\ref{eq_y_relay}), it can be similarly obtained from \eqref{eqd.2}
as above, with the dimensions of $\bH_{relay}$ being $2M_{r}LT\times
2KM_{u}LT$. We consider two kinds of channel settings, the fixed
channel and the slow fading channel. In the fixed channel setting,
the channels are deterministic and we use the achievable rate as a
performance metric. For the slow fading channel, $
 \mathbf{H}_{dst}$ and $\mathbf{H}_{relay}$ are random but remain
constant over the whole code block. Since the MARC cannot support
any non-zero rate pairs with vanishing error probabilities now, we
use the DMT or the outage probabilities as performance metrics. The
entries of the channel matrices are assumed to be  i.i.d.
$\mathcal{CN}(0,1)$ when they are slow faded; i.e., we assume
Rayleigh fading in this case.



\vspace{-4mm}
\section{Proposed Relay-Mapper Aided Multiuser Lattice coding schemes}\label{proposed mapper}
\vspace{-2mm}
 In this section, we specify the proposed multiuser lattice coding
schemes for the MARC, i.e., O-MLC and  MS-MLC. Each of O-MLC and
MS-MLC consists of three building blocks: \textsl{1)}  the relay
mapper which decides which codeword to be transmitted at the relay,
\textsl{ 2)}  Loeliger-type nested lattices for the users' and the
relay's codebooks and \textsl{3)} a $K$-stage coset decoder, which
generalizes the one-stage decoder of \cite{latticemac}. We first
briefly introduce the adopted lattice codebooks. Tailored for them,
the relay mappers,
  the one-to-one mapper $\psi^{one}$ and the
modulo-sum mapper $\psi^{mod}$, for O-MLC and MS-MLC, respectively
are  shown in  Section \ref{sec_mapper}. Then the whole
encoding/decoding blocks  are introduced in Section
\ref{sec.lattice_coding}.

\vspace{-6mm}
\subsection{Loeliger-type Nested Lattice Codebooks} \label{sec_latticecodebook}
\vspace{-2mm}
 In our code construction,  codebooks of the $i$-th user ($1 \leq i \leq K$) and the relay ($i=K+1$) are generated from  Loeliger-type
 nested lattices.  To be specific, we introduce the following definitions.
\begin{definition} [Self-similar nested lattice code] \label{Def_nested}
For user $i$, let $\Lambda_{C_{i}}$ be a $2M_uLT$-dimensional coding
lattice and  ${\Lambda_{S_{i}}} \subset \Lambda_{C_{i}}$ be the
shaping lattice. The nested lattice codebook is defined as
$C_i^{nest}\triangleq\{\bar{\mathbf{c}}_{i}:
\bar{\mathbf{c}}_{i}=\mathbf{c}_{i}\mod \Lambda_{S_i},
\mathbf{c}_{i}\in \Lambda_{C_i}\}$, where $\bar{\mathbf{c}}_{i}$ are
the coset leaders \cite{nested_lattice} of the partition
$\Lambda_{C_i}/\Lambda_{S_i}$ (the set of cosets of $\Lambda_{S_i}$
relative to $\Lambda_{C_i}$). The codebook size is
$|C_i^{nest}|=2^{{R_{i}}{LT}}$, where the code rate is $R_i$ bits
per channel use (BPCU). When
$\Lambda_{S_i}=(2^{R_i/2M_u})\Lambda_{C_i}$ where $(2^{R_i/2M_u})
\in\mathbb{N}$ is the nesting ratio, the nested lattice code
$C_i^{nest}$ is called a self-similar nested code.\footnote{ Our
results can be easily generalized to the case in which good (but
maybe not self-similar) nested codes as in \cite{latc} are used.}
\end{definition}
For a Loeliger-type nested-lattice ensemble, the coding lattice
$\Lambda_{C_i}$ for user $i$ is randomly chosen from the Loeliger
lattices ensemble which is generated from linear codes $C^{Lo}_i$
\cite{avgboundlattice}. The detailed definition is given in
Definition \ref{Def_logier} in the Appendix \ref{MHthm1proof}-(I).

The codebook for the relay is generated similarly as above but with
dimension $2M_rLT$.

\vspace{-6mm}
\subsection{Proposed Relay Mappers} \label{sec_mapper}
\vspace{-2mm} The relay mapper $\psi$ is used to select the
codeword (coset leader) $\bar{\bc}_{K+1}$
 to be transmitted from the relay (transmitter $K+1$)
according to the codewords (coset leaders) $\bar{\bc}_i$,
$i=1,..,K$, of the $K$ users. In other words, by concatenating the
total $K+1$ codewords as a super one $\bar{\mathbf{c}}=
[(\bar{\bc}_1^T, \ldots,\bar{\bc}_K^T),
\bar{\bc}_{K+1}^T]^T=[\bar{\bc}_u^T,\bar{\bc}_r^T]^T$((T1.5) in
Table \ref{notations_table}),
\renewcommand{\baselinestretch}{1}
\begin{table}
\begin{threeparttable}
 \caption{List of Frequently used notation}
 \label{notations_table}\centering
\begin{tabular}[p{1cm}]{lp{7cm}p{5cm}}
\toprule[0.8pt]
Notation & Definition & Description \\
\hline
(T1.1)\; $\mathbb{Z}_{p}^{n}$ &  $n$-dimensional finite field over
$\mathbb{Z}_{p}=\{0,1,...,p-1\}$, where $p$ is a prime & Prime $p$ finite field\\
(T1.2)\; $\mathcal{V}_{\Lambda}$& The set of $\mathbf{v}\in
\mathbb{R}^{n}$ closer  to $\mathbf{0}$ than to any other  $
{\lambda} \in {\Lambda}$, for a lattice ${\Lambda}$ & Voronoi Region\\
(T1.3)\; $V_{f}(\Lambda)$ & Volume of Voronoi region $\mathcal{V}_{\Lambda}$ in (T1.2) & Fundamental Volume\\
(T1.4)\; $\mathbf{v}_i$ & $n_{i} \times 1$ vector $\mathbf{v}_i \in
\Lambda_{i}$ consists of the elements of $\mathbf{v}$ in
$\Lambda_{i}$, where $\mathbf{v}=[\mathbf{v}_{1}^{T}, \ldots,
\mathbf{v}_{K+1}^{T}]^{T}$ is $\left(\sum_{i=1}^{K+1}n_{i}\right)
\times 1$, and $\Lambda_{i}$ is
transmitter $i$'s lattice (coding or shaping)& Vector for transmitter $i$, where $1 \leq i \leq K$ correspond to the users while $i=K+1$ corresponds to the relay\\
(T1.5)\; $\mathbf{v}_u$, $\mathbf{v}_r$ & $\mathbf{v}_u=[\mathbf{v}_{1}^{T},...,\mathbf{v}_{K}^{T}]^{T}$, $\mathbf{v}_r=\mathbf{v}_{K+1}$, with $\mathbf{v}_i$  defined in (T1.4)& Super-vector for all users, and vector for relay\\
(T1.6)\; $C^{Lo}_{ur}$ &  $C^{Lo}_{1}\times\cdots \times C^{Lo}_{K+1}$, where $C^{Lo}_{i}$ is the Loeliger linear code for transmitter $i$ as in Definition \ref{Def_logier}  & Super Loeliger-linear-code of users and relay \\
(T1.7)\; $\Lambda_{C_u}$, $\Lambda_{S_u}$ & $\Lambda_{C_{1}}\times
\cdots \times \Lambda_{C_{K}} $, $\Lambda_{S_{1}}\times \cdots
\times
\Lambda_{S_{K}} $  & Super-coding and shaping lattices of users  \\
(T1.8)\; $\Lambda_{C_r}$, $\Lambda_{S_r}$ & $\Lambda_{C_{K+1}}$, $\Lambda_{S_{K+1}}$ & Super-coding and shaping lattices of relay   \\
(T1.9)\; $\Lambda_{C_{ur}}$, $\Lambda_{S_{ur}}$ &
$\Lambda_{C_{1}}\times \cdots \times \Lambda_{C_{K+1}}$,
$\Lambda_{S_{1}}\times \cdots \times
\Lambda_{S_{K+1}}$ &  Super-coding and shaping lattices of users and relay\\
(T1.10)\; $\bar{\mathbf{v}}$, {$\bar{\mathbf{v}}_{i}$}& $\mathbf{v} \mod \Lambda_{S_{ur}}$, \; {$\mathbf{v}_{i} \mod \Lambda_{S_{i}}$}  & Modulo lattice operation \\
(T1.11)\; $p_{i},\gamma_{i}$ &  Definition \ref{Def_logier} & Loeliger lattice ensemble parameters \\
(T1.12)\; $\psi^{one}$, $\psi^{mod}$ & Definition \ref{Def_OneOne}, \ref{Def_Modulo} & One-to-one Mapper, Modulo-sum Mapper  \\
(T1.13)\; $C_{u}^{nest}$, $C_{r}^{nest}$ & Definition \ref{Def_OneOne} & Users' Codebooks,  Relay's Codebook \\
(T1.14)\; $\psi^{one}_{\Delta}$, $\psi^{mod}_{\Delta}$ &
$\psi_{\Delta}^{one} :
\psi_{\Delta}^{one}\left(\bar{{\mathbf{d}}}_{u}(\bw)\right)=\bar{{\mathbf{d}}}_{r}(\bw)$,\;
$\psi_{\Delta}^{mod} :
\psi_{\Delta}^{mod}\left(\bar{{\mathbf{d}}}_{u}(\bw)\right)=\bar{{\mathbf{d}}}_{r}(\bw)$
 &  Differential mapper for one-to-one and modulo-sum mapper\\
(T1.15)\;
$(\mathcal{C}_{\mathcal{C}_{\psi_{\Delta},\mathrm{E}}})^{*}$
&$(\mathcal{C}_{\mathcal{C}_{\psi_{\Delta},\mathrm{E}}})^{*}\triangleq\{\bar{\mathbf{d}}:\bar{\mathbf{d}}\in
C_{\psi_{\Delta}}^{*},
C_{\psi_{\Delta}}\in\mathcal{C}_{\psi_{\Delta},\mathrm{E}}\}$ & Differential  codewords in ensemble $\mathcal{C}_{\psi_{\Delta},\mathrm{E}}$ \\
(T1.16)\; $\mathds{O}^{\psi_{\Delta}}$ & \eqref{eq.ambiguity} & Differential ambiguity cosets \\
(T1.17)\; $\mathbf{M}^{S}$ & Matrix
$\mathbf{M}^{S}\triangleq[\mathbf{M}_{i_{1}},...,\mathbf{M}_{i_{|S|}}]$
is formed from $\mathbf{M}=[\mathbf{M}_{1},...,\mathbf{M}_{K_M}]$,
where $K_M$ is the number of the submatrices of $\mathbf{M}$, $S=\{i_{1},...,i_{|S|}\}$, $1 \leq i_{1}<\cdots<i_{|S|} \leq K_M$ \;\;\;\;\;\;\;\;\;\;\;\;\;\;\;\;\;\;\;\;\;\;     & Matrix for users in set $S$\\
(T1.18)\; $ R_{unG}^{dst}(\mathbf{H}_{dst}^{\{S,K+1\}})$ &$\frac{1}{2}\log|\mathbf{I}_{2(|S|M_{u}+M_{r})LT}+\left(\mathbf{H}_{dst}^{\{S,K+1\}}\right)^{H}\mathbf{H}_{dst}^{\{S,K+1\}}|$ & Rate constraint at the destination using unstructured Gaussian codebook\\
(T1.19)\; $ R_{unG}^{{relay}}(\mathbf{H}_{relay}^{S})$ &$\frac{1}{2}\log|\mathbf{I}_{2|S|M_{u}LT}+\left(\mathbf{H}_{relay}^{S}\right)^{H}\mathbf{H}_{relay}^{S}|$& Rate constraint at the relay using unstructured Gaussian codebook\\
(T1.20)\; $d(\mathbf{r})$ & The diversity gain $\lim\limits_{\rho
\rightarrow \infty} \frac{-\log P_{E}(\rho)}{\log \rho}$ given a
certain multiplexing gain $\mathbf{r}$, where $P_{E}(\rho)^{\$}$ is
the probability that not all users are correctly decoded, $\rho$ is
the received SNR, and $\mathbf{r}=[r_{1},...,r_{K}]$ with
$r_{i}\triangleq \lim\limits_{\rho \rightarrow \infty}
\frac{R_{i}(\rho)}{\log \rho}$ and $R_{i}(\rho)$ is the transmission
rate of user
 $i$ & Diversity and multiplexing tradeoff (DMT) \\
(T1.21)\; $\bar{\mathbf{z}}_{p}$& Apply componentwise modulo $p$
operation on
$\mathbf{z}$ & Modulo p\\
(T1.22)\; $\bar{\mathbf{z}}_{\underline{p}}$&
$[\bar{(\mathbf{z}_{1})}_{p_{1}}^{T},...,\bar{(\mathbf{z}_{1})}_{p_{K+1}}^{T}]$,
for  $\underline{p}=(p_{1},...,p_{K+1})$, $\mathbf{z}=[\mathbf{z}_{1}^{T},...,\mathbf{z}_{K+1}^{T}]^{T}$ & Modulo vector $\underline{p}$\\
(T1.23)\; $\underline{\gamma}\mathbf{z}$
 &  $[\gamma_{1}\mathbf{z}_{1}^{T},...,\gamma_{K+1}\mathbf{z}_{K+1}^{T}]^{T}$, for  $\underline{\gamma}=(\gamma_{1},...,\gamma_{K+1})$, $\mathbf{z}=[\mathbf{z}_{1}^{T},...,\mathbf{z}_{K+1}^{T}]^{T}$ & ``Vector" Hadamard  product \\

 \bottomrule[0.8pt]
\end{tabular}
\begin{tablenotes}  
\footnotesize \item[\$]  Instead of $P_{E}(\rho)$, the outage
probability is used for the calculation of DMT of  the relay node in
the DDF \cite{on_the_achievableDMT,optimality_ARQ}
\end{tablenotes}
\end{threeparttable}
\end{table}
\renewcommand{\baselinestretch}{1.785}
then $\psi(\bar{\mathbf{c}}_u)=\bar{\mathbf{c}}_{r}$. Now we
introduce the proposed mappers.  The first one is as follows.
\begin{definition} [One-to-one mapper] \label{Def_OneOne}
The one-to-one  mapper $ \psi^{one}:C^{nest}_u\rightarrow
C^{nest}_r$ for  O-MLC is a one-to-one bijective mapping that maps
coset leaders in the super-codebook of users $C_u^{nest}$ to the
relay codebook $C_r^{nest}$. Here $
C_u^{nest}\triangleq\{\bar{\mathbf{c}}_{u}:\bar{\mathbf{c}}_{u}=(\mathbf{c}_{u}
\mod \Lambda_{S_u}), \mathbf{c}_{u} \in \Lambda_{C_u}\}$ and
$C_r^{nest}\triangleq\{\bar{\mathbf{c}}_{r}:
\bar{\mathbf{c}}_{r}=(\mathbf{c}_{r} \mod \Lambda_{S_{r}}),
\mathbf{c}_{r} \in \Lambda_{C_{r}}\}$, where $\Lambda_{S_u}$ and
$\Lambda_{C_u}$ are defined  in (T1.7) while $\Lambda_{S_r}$ and
$\Lambda_{C_r}$ are defined in (T1.8) in Table
\ref{notations_table}.
\end{definition}

Note that $|C_r^{nest}|=|C_u^{nest}|$ since the aforementioned
mapping is bijective. The one-to-one relay mapper may require high
complexity as the size of super-user codebook $|C_u^{nest}|$ becomes
large. To reduce the complexity of the mapper, we introduce another
mapping $\psi^{mod}$, where the modulo-sum operation is performed at
the relay, which is motivated by the XOR operations in  network
coding \cite{netif}.
\begin{definition} [Modulo-sum mapper] \label{Def_Modulo}
The modulo-sum mapper $\psi^{mod}:C_u^{nest}\rightarrow C_r^{nest}$
for  MS-MLC is defined as
$\psi^{mod}(\bar{\mathbf{c}}_{u})=\sum_{i=1}^{K}\psi^{mod}_{i}(\bar{\mathbf{c}}_{i})\mod
\Lambda_{S_{r}}$, where $\psi^{mod}_{i}: {C_{i}^{nest}}\rightarrow
C_r^{nest}$ is an injective mapping for user $i$ with nested user
codebook $C_{i}^{nest}$ given in Definition \ref{Def_nested}, while
$C_u^{nest}$ and $C_r^{nest}$ are given in Definition
\ref{Def_OneOne}.
\end{definition}

Note we require that $|C_r^{nest}| \geq \max_{i}\{|C_i^{nest}|\}$ to
ensure that the mapping $\psi_{i}^{mod}$ in Definition
\ref{Def_Modulo} is injective. The domain dimension  of $\psi^{mod}$
is at most $\max_{i}\{|C_i^{nest}|\}$ while that of the one-to-one
mapper $\psi^{one}$ is $\prod_{i=1}^{K}|C_i^{nest}|$, and
$\psi^{mod}$ has less complexity compared with $\psi^{one}$.
However, the one-to-one mapper $\psi^{one}$ ensures that two
different users' super-codewords are mapped to different codewords
at the relay, and results in better error performance. In contrast,
it is possible that two different super-codewords map to the same
codeword of the relay due to the modulo-sum operation in
$\psi^{mod}$, and ambiguity occurs while decoding. \vspace{-6mm}
\subsection{Encoders and Proposed $K$-stage Coset Decoders}\label{sec.lattice_coding} {\emph{1) Encoders at the $K$ transmitters and the relay}:
\vspace{-0mm} User $i$ selects the codeword $\bar{\mathbf{c}}_{i}$
according to its message $w_{i}$ from the codebook described in
Section \ref{sec_latticecodebook}, and sends signal $\mathbf{x}_{i}$
into the MARC \eqref{eq_y_relay}-\eqref{eqYdst} (cf.
\eqref{eq_x_dst}) \vspace{-0mm}
\begin{equation}
\mathbf{x}_{i}=\left([\bar{\mathbf{c}}_{i}-\mathbf{u}_{i}] \mod
\Lambda_{S_{i}}\right) \label{eq.8}
\end{equation}
where  $\mathbf{u}_{i}$ is a dither signal uniformly distributed
over the Voronoi region $\mathcal{V}_{\Lambda_{S_{i}}}$ of the
shaping lattice $\Lambda_{S_{i}}$
 ((T1.2) in Table
\ref{notations_table}).
From \cite{erezlattice}, due to the dither $\mathbf{u}_{i}$,
$\mathbf{x}_{i}$ is uniformly distributed over
$\mathcal{V}_{\Lambda_{S_{i}}}$ and independent of
$\bar{\mathbf{c}}_{i}$. To meet the input power constraints
(\ref{eqd.5a}) as in \cite{latc}, we let the second-order moment of
the shaping lattice $\sigma^{2}(\Lambda_{S_{i}})=1/2$.

As for the relay (transmitter $K+1$), it will first decode the
users' messages, using the operation  introduced below. Then the
relay selects its codeword $\bar{\mathbf{c}}_{K+1}$ according to the
{decoded}  transmitted codewords $\bar{\mathbf{c}}_{i}$s using the
mappers in Section \ref{sec_mapper}, and then transmits
$\mathbf{x}_{K+1}$ as in \eqref{eq.8} with  the power constraint
(\ref{eqd.5a}).

{\emph{2) $K$-stage coset decoder:} We first introduce the decoder
at the destination, which generalizes the single stage coset decoder
in \cite{latticemac} to the multi-stage one. The coset decoder
disregards the boundaries of the codewords and avoids the
complicated boundary control\cite{onmax},  which allows for
significant complexity reductions compared to  ML decoding.
Moreover, it facilitates the efficient sphere decoding algorithm
\cite{aunified},\cite{onmax}. To decode messages from the received
signal $\mathbf{y}_{dst}$ in \eqref{eqYdst}, the proposed $K$-stage
coset decoder works as in Table \ref{table.tree} with the detailed
steps explained as follows.
\begin{table}
\begin{threeparttable}
\renewcommand{\arraystretch}{0.9}
\caption{ Algorithm of the $K$-stage coset decoding for the
Destination\tnote{*} }
 \label{table.tree} \centering
\begin{tabular}{ll}
\hline \hline \\
\multicolumn{1}{l}{ \textit{\textbf{A. Generation of the decoding tree:}}} \\
\multicolumn{1}{l}{ \;\textit{\textbf{Initialization:}} For the root node, Node{\_}user$=\text{empty}$,  Node{\_}{stage}$=1$} \\
\;\textbf{for $k=1:K-1$}\\
\;\;\;for each node with  node{\_}stage$=k$,\\
\;\;\;generates $(K-k+1)$ child nodes for the next stage (Node{\_}stage$=k+1$),\\
\;\;\;\;\;for the child nodes  from left to right,\\
\;\;\;\;\;Node{\_}user  are assigned from the set
\;\;\;\;\;$\{1,...,K\}\setminus \mathcal{S}$ in increasing order,
\\\;\;\;\;\;where $\mathcal{S}= \{j: \text{Node{\_}user}=j , \text{for the
ancestors{**}  of child
node} \}$\\
\;\textbf{end}\\
\multicolumn{1}{l}{\textit{\textbf{B. $K$-stage candidate generation via coset decoding:}}} \\
\textbf{for} $k=1:K$\\
\;\;\textbf{\emph{Step B.1}}: For the node $(k,j)$, let
$\mathbf{y}_{dst}^{(k,j)}=\mathbf{y}_{dst}-\sum_{i\in
\mathcal{S}_p^{(k,j)}}{\mathbf{H}_{i}^{d}}\hat{\mathbf{x}}_{i}$,
where $\mathcal{S}_p^{(k,j)}$ is the set of previously-decoded users
$i$  along the
\\\;\;   path from root node to node $(k,j)$,
$ \hat{\mathbf{x}}_{i}$ is the transmitted signal (from
\eqref{eq.8}) corresponding to previous-decoded user $i$'s message,\\
\;\;  and  the channel $\bH_i^{d}$ is formed from \eqref{eqd.7}.
\\\;\; (For example, for the path starting from root node to node
(3,1) in Fig.~\ref{fig.tree}, the set $\mathcal{S}_p^{(3,1)}$ is
$\{1,2\}$.)
\\
\;\;\textbf{\emph{Step B.2}}: Decodes the users' messages in the
residual user set $ \mathcal{S}^{(k,j)}=\{1,...,K\}\setminus
\mathcal{S}_p^{(k,j)}$
by coset decoding \\
\;\; $ \;\;\;\;\;\;\;\;\;\;\;\;\;\;\;\;\;
\;\;\;\;\;\;\;\;\;\;\;\;\;\;\;\;\;\;\;\; \;\;\;
\;\;\;\;\;\;\;\;\;\;\;\;\;\;\;\;\;\;\; \;\;\;
\;\;\;\;\;\;\;\;\;\;\;\;\;\;\;\;\;\;\; \hat{\mathbf{c}}^{(k,j)}=\arg
\min_{\mathbf{c}\in
\mathds{O}^{\psi,{(k,j)}}}\mathrm{M}^{(k,j)}(\mathbf{c}^{(k,j)}), $
\\
\;\; where \\
\;\; $\;\;\;\;\;\;\;\;\;\;\;\;\;\; \;\;\;
\;\;\;\;\;\;\;\;\;\;\;\;\;\;\;\;\;\;\; \;\;\;
\;\;\;\;\;\;\;\;\;\;\;\;\;\;\;\;\;\;\; \;\;\;
\mathrm{M}^{(k,j)}(\mathbf{c}^{(k,j)})=\left|\mathbf{F}_{dst}^{(k,j)}\mathbf{y}_{dst}^{(k,j)}+\mathbf{B}_{dst}^{(k,j)}\left(\mathbf{{u}}^{(k,j)}-\mathbf{{c}}^{(k,j)}\right)\right|^{2}.
\;\;\;\;\;\;\;\;\;\;\;\;\;\;\;\;
\;\;\;\;\;\;\;\;\;\;\;\;\;\;\;\;\;\;\;\;\;\;
\;\;\;\;\;\;\;\;\;\;\;\;\;\;\;\;\; \mathrm{ (T2.1)} $\\\;\; Let
$m_{k}=2((K-k+1)M_{u}+M_{r})LT$ and $N'=2NLT$, the $m_{k} \times 1$
$\mathbf{{c}}^{(k,j)}$ is formed from the super-codeword
$\mathbf{c}$ by collecting all
 \\
\;\;  $\mathbf{c}_{i} \in \Lambda_{C_i}$  of transmitter $i$ ((T1.4)
in Table \ref{notations_table}) where $i \in
\{\mathcal{S}^{(k,j)},K+1\}$; the dither signal
$\mathbf{{u}}^{(k,j)}$ is formed from $\mathbf{u}$ similarly; the

 \\\;\; $m_{k}\times N'$ $\mathbf{F}_{dst}^{(k,j)}$ and $m_{k} \times m_{k}$
$\mathbf{B}_{dst}^{(k,j)}$ are the corresponding MMSE -GDFE filters
for $\mathbf{{c}}^{(k,j)}$;  and the searching cosets formed by
\\\;\; previously-decoded users $i\in \mathcal{S}_p^{(k,j)}$ is
\\\;\;\;\;\;\;\;\;\;\;\;\;\;\;\;\;\;\;\;\;\;\;\;\;\;\;\;\;\;\;\;\;\;\;\;\;
$\mathds{O}^{\psi,{(k,j)}}=\{\mathbf{c}:\mathbf{c}\in
\mathds{O}^{\psi}, ({\mathbf{c}}_{i} \mod
\Lambda_{S_{i}})=(\hat{\mathbf{c}}^{p,(k,j)}_{i} \mod
\Lambda_{S_{i}}), i\in \mathcal{S}_p^{(k,j)}\}$, \\\;\; where
$\mathds{O}^{\psi}$  is given in \eqref{eq_D_c}, $\bc_i \in
\Lambda_{C_i}$ ((T1.4) in Table \ref{notations_table}) where
$\Lambda_{C_i}$ is transmitter $i$'s coding lattice, and
$(\hat{\mathbf{c}}^{p,(k,j)}_{i} \mod \Lambda_{S_{i}})$ is   \\
\;\; the codeword (coset leader) of the previously-decoded user $i$
where $i \in \mathcal{S}_p^{(k,j)}$.

\\ \textbf{end}\\

\multicolumn{1}{l}{\textit{\textbf{C. Candidate elimination:}}} \\

For node $(K,j)$ at the final stage $K$, combine the decoded
messages to produce the $K\times 1$ super-message
$\hat{\mathbf{w}}_{t}^{({K},j)}$ as \\ the candidate at node
$(K,j)$. The decoder searches for all ${K!}$ candidates
$\hat{\mathbf{w}}_{t}^{(K,j)}$ and declares  the one such that
$\mathbf{H}_{dst}\hat{\mathbf{x}}^{(K,j)}$ is \\ nearest to the
received signal $\mathbf{y}_{dst}$ as the final decoded message,
where $\hat{\mathbf{x}}^{(K,j)}$ is the transmitted signal according
to message $\hat{\mathbf{w}}_{t}^{(K,j)}$.\\
\hline \hline
\end{tabular}
\begin{tablenotes}  
\footnotesize \item[*] The algorithm for the relay can be
identically obtained by ignoring the relay's codewords. \item[**]
The ancestors of a node are all the nodes along the path from the
root to that node (not included).
\end{tablenotes}
\end{threeparttable}
\end{table}

According to Table \ref{table.tree}, the decoder first generates the
decoding tree as in Step A. An example for $K=3$ is given in Fig.
\ref{fig.tree}.
\begin{figure}[t] 
\centering
\includegraphics[scale=0.51]{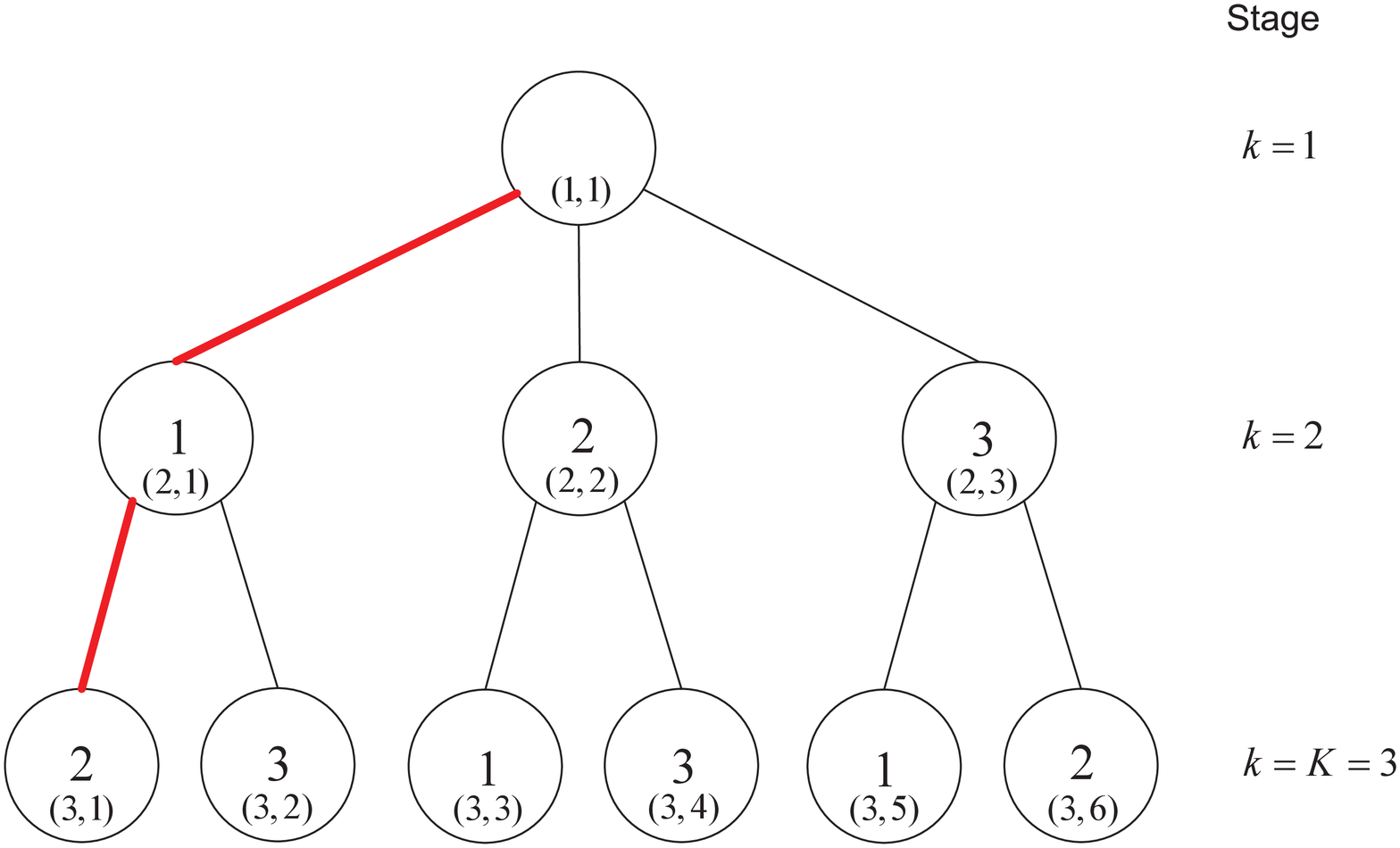}
\caption{The multiuser decoding tree for the $K$-stage coset
decoding in Table \ref{table.tree} with $K=3$. Here for each node,
the label $(k,j)$ denotes the $j$-th node from the left at the
$k$-th stage (Node{\_}stage in Table \ref{table.tree}), while the
number $i$ inside a circle denotes  the  index $i$ of the user
assumed to have been correctly decoded at the previous stage
(Node{\_}user in Table \ref{table.tree}). For example, when the
coset decoding in Table \ref{table.tree} is performed at node
$(2,1)$ (the leftmost child node of the root node), user 1 is
assumed to have been correctly decoded. The path from root node
$(1,1)$ to node $(3,1)$ is illustrated with bolder lines. }
\label{fig.tree}
\end{figure}
The decoder will traverse nodes from stage $1$ to $K$ in the tree,
and produce the candidate codewords. We take the root node in Fig.
\ref{fig.tree} as an example to explain Steps B.1 and B.2 in Table
\ref{table.tree}. We use the notation
$\bar{\mathbf{c}}(\mathbf{w}_t)$ to represent the super-codeword for
the $K+1$ transmitters corresponding to the $K \times 1$ transmitted
message vector $\bw_{t}=[(\bw_{t})_{1},\ldots,(\bw_{t})_{K}]^T$,
where $(\bw_{t})_{i}$ denotes  the transmitted message  for user
$i$. For the root node (the first stage coset decoding), with
received signal $\mathbf{y}_{dst}$ at the destination; the decoder
output $\hat{\mathbf{c}}$ according to (\ref{eq.8}) is
\begin{equation}
\hat{\mathbf{c}}=\arg \min_{\mathbf{c}\in {
\mathds{O}^{\psi}}}\mathrm{M}({\mathbf{c}}), \;\; \mbox{with} \;\;
\mathrm{M}({
\mathbf{c}})=|\mathbf{F}_{dst}\mathbf{y}_{dst}+\mathbf{B}_{dst}({\mathbf{u}}-{\mathbf{c}})|^{2},
\label{eq.4}
\end{equation}
 where
$\mathbf{F}_{dst}$ and $\mathbf{B}_{dst}$  are  the forward and
feedback filters of the minimum mean-square  error (MMSE) estimation
generalized decision feedback equalizer (GDFE) as defined in
\cite{latticemac} and \cite{latc}   respectively;
$\mathbf{u}=[\mathbf{u}_{1}^{T},...,\mathbf{u}_{K+1}^{T}]^{T}$ and
the decoder searches points in the cosets $\mathds{O}^{\psi}$ (see
\eqref{eq_D_c}) of all $\bar{\mathbf{c}}(\mathbf{w})$ (defined
similarly to $\bar{\mathbf{c}}(\mathbf{w}_{t})$ above),
$\mathbf{w}\in \mathcal{W}$, with $ \mathcal{W}$ being the set of
all possible messages:
\begin{equation}
{ \mathds{O}^{\psi}}\triangleq\{\mathbf{c}\in
\Lambda_{C_{ur}}:(\mathbf{c}\mod
\Lambda_{S_{ur}})=\bar{\mathbf{c}}(\mathbf{w}), \mathbf{w}\in
\mathcal{W} \}, \label{eq_D_c}
\end{equation}
where the super-lattice of users and the relay $\Lambda_{C_{ur}}$ is
defined in (T1.9) of Table \ref{notations_table}. The decoded
message $\hat{\mathbf{w}}_t$ is declared if
{$\bar{\mathbf{c}}(\hat{\mathbf{w}}_t)$} and  the decoded
$\hat{\mathbf{c}}$ from \eqref{eq.4}
 belong to the same coset,
$\hat{\mathbf{c}} \mod
\Lambda_{S_{ur}}=\bar{\mathbf{c}}(\hat{\mathbf{w}}_t)$. For the node
$(k,j)$ in the decoding tree (the $j$-th node from the left at
$k$-th stage) we consider a path from the root node to node $(k,j)$.
An example for $(k,j)=(3,1)$ is given in Fig. \ref{fig.tree}. In
Step B.1 of Table \ref{table.tree}, the decoder assumes that all the
users at the nodes along the path (users $1$ and $2$ for the example
path in Fig. \ref{fig.tree}), have already been successively decoded
(not necessarily correctly), and subtract the corresponding
transmitted signals from the received signal $\by_{dst}$.
Then the decoder decodes the remaining transmitted messages by
(T2.1) in the Step B.2 of Table \ref{table.tree} (which corresponds
to \eqref{eq.4}). Finally, as in Step C, the  decoder  searches for
all ${K!}$ candidates produced at the nodes at the $K$-th stage
(instead of all $2^{LT\sum_{i=1}^{K}R_i}$ codewords) to choose the
final decoded message.

The decoder at the relay also uses \eqref{eq.4} as the criterion to
decode messages from $\by_{relay}$ in \eqref{eq_y_relay} with the
corresponding MMSE-GDFE forward and feedback filters. The main
difference is that now the decoding does not make use of the relay
codebook, and the decoder searches in the super-lattice of users
$\Lambda_{C_{u}}$ instead of the coset $\mathds{O}^{\psi}$ in
\eqref{eq_D_c}. The complexity of the decoder in Table
\ref{table.tree} is about
$\bigg(\sum_{k=1}^{K}\frac{K!}{(K-k+1)!}\mathcal{O}(m_{k}^{3})+K!\mathcal{O}\big(m_{K}^{2}\big)\bigg)=\mathcal{O}((LT)^3)$\footnote{
According to our practical design in  Section \ref{Simulation}, one
can use a linear mapper to implement  O-MLC and MS-MLC. Then the
$m_{k}$-dimensional coset decoder can be implemented by the sphere
decoding algorithm in \cite{onmax} with complexity rougly being
$\mathcal{O}(m_{k}^{3})$.}, where $m_k=2((K-k+1)M_u+M_r)LT$. It is
much smaller compared with the complexity of the ML decoder
$\mathcal{O}(2^{LT\sum_{i=1}^{K}R_i})$, which grows exponentially
with the block length $LT$.

Note that since the super-codewords have to satisfy the relay
mapping rule (which may not be linear) in Section \ref{sec_mapper},
the set $\mathds{O}^{\psi}$ is \textit{not} necessary a sublattice
of $\Lambda_{C_{ur}}$. This makes (\ref{eq.4}) different from the
MMSE-GDFE \textit{lattice} decoder in \cite{latticemac} and
\cite{latc}. Without the algebraic structure of a lattice,  the
upcoming error probability analysis in the next section, and the
design of practical decoding algorithms for the simulations in
Section \ref{Simulation} will be much more difficult than those in
\cite{latticemac}.

\vspace{-3mm}
\section{ Performance analysis of the proposed coding schemes }
\label{sec.Error_Analysis} \vspace{-2mm}
  In this section, we establish the achievable rate regions for the
  MARC defined in (\ref{eq_y_relay})  and (\ref{eqYdst}),  using the proposed O-MLC and
MS-MLC for a fixed channel matrix, respectively. We show that the
rate performance, which was originally achieved by using an
unstructured random codebook in \cite{cooperaive_capacity}, is now
achieved by our structured O-MLC. The key is using the $K$-stage
coset decoder which performs successive cancellation on the
multiuser decoding tree,  thus avoiding the rate loss incurred by
the one-stage coset decoder in \cite{latticemac}.  The rate loss due
to use of a one-stage coset decoder is derived in Corollary
\ref{coro_ach_one}. However, in Corollary \ref{DMTthm1}, we show
that the rate loss is relatively small in the high SNR regime, and
structured O-MLC with the one-stage coset decoder   achieves the
optimal DMT for the MARC in
 (\ref{eq_y_relay}) and  (\ref{eqYdst}). Note that the DMT was
 achieved by an unstructured random codebook and ML decoding in
 \cite{optimality_ARQ}. For  MS-MLC, we show that it
can approach the rate performance of  O-MLC  by increasing the
relay's codebook size, and thus can tradeoff between the rate
performance and complexity.

In the error  analysis of the proposed schemes, the conventional
approach tailored  for  ML decoding \cite{case_MAC}
\cite{optimality_ARQ} and \cite{DMT_MAC} fails in predicting the
performance of the coset decoder in (\ref{eq.4}) due to the infinite
number of points $\mathbf{c} \in \mathds{{O}}^{\psi}$  where the set
$\mathds{{O}}^{\psi}$ is defined in (\ref{eq_D_c}). To solve this
problem, from (\ref{eq_D_c}), we define the \emph{  differential
ambiguity  cosets} for the event that the transmitted message $
\mathbf{w}_{t}$ is erroneously decoded as $\mathbf{w}$ as
\begin{equation}
{\mathds{O}^{\psi_\Delta}}\triangleq\{\mathbf{d}\in
\Lambda_{C_{ur}}:\bar{\mathbf{d}}={\bar{\mathbf{d}}(\mathbf{w})} ,
\mathbf{w}\in\mathcal{W}, \mathbf{w}\neq\mathbf{w}_t\},
\label{eq.ambiguity}
\end{equation}
where the \emph{differential codeword}
$\bar{\mathbf{d}}(\mathbf{w})\triangleq
(\bar{\mathbf{c}}(\mathbf{w})-\bar{\mathbf{c}}(\mathbf{w}_t)\mod
\Lambda_{S_{ur}})$ with $\Lambda_{S_{ur}}$ given in (T1.9) of Table
\ref{notations_table} and the vector after modulo operation
$\bar{\mathbf{d}}$ is defined in (T1.10). From the closure property
of lattice addition, $\bar{\mathbf{d}}(\mathbf{w})\in
\Lambda_{C_{ur}}$. Moreover, $\mathds{O}^{\psi_\Delta}$ is not  a
direct product of $K+1$ lattices (i.e., $\Lambda_{C_{ur}}$), and
thus the techniques in \cite{latticemac} fail to predict the error
probability of O-MLC. {We propose a new error probability
upper-bound which avoids directly counting points of
$\mathds{O}^{\psi_\Delta}$ in the decision region of the decoder as
this kind of evaluation is intractable.} Please see the upcoming
Lemma \ref{MHthm1} presented in the proof of Theorem
\ref{Thm_ach_one} and the discussions after it.

Besides providing the aforementioned new proof techniques, we also
show that there will be a rate loss due to the one-stage coset
decoding in \cite{latticemac}. The loss can be circumvented with the
proposed $K$-stage coset decoders by letting the decoder
successively cancel the previously decoded messages. We show that in
our multiuser decoding tree as in Fig. \ref{fig.tree}, there exists
at least one path at each stage of Step B of Table \ref{table.tree}
on which the previously decoded messages are correct. Then we can
\emph{at least} obtain a better decoder for the remaining users in
the next stage to improve the error performance.  To show that we
can always choose the correct codeword from the candidates at the
final stage in the decoding tree, we use a suboptimal decoder
instead of the optimal one in Step C of Table \ref{table.tree} to
complete our  proof. Note that our decoder is different from the
successive MAC decoding studied in \cite{macratesplit}, where the
decoder is based on ML decoding and the previously decoded messages
are correct.

Now,  we are ready to derive the achievable rate region of
(\ref{eq_y_relay}) and (\ref{eqYdst}),   using  O-MLC as follows.
\begin{thm}
\label{Thm_ach_one} For  the MARC in (\ref{eq_y_relay}) and
(\ref{eqYdst}), the DDF rate region in (\ref{ach_one_relay}) and
(\ref{ach_one}), which is achieved by unstructured Gaussian
codebooks and ML decoding in \cite{cooperaive_capacity}, is
achievable by the \textit{structured} O-MLC and the $K$-stage coset
decoder in Table \ref{table.tree}, where the rate constraints at the
relay and destination are
\begin{align}
\sum_{i\in S}R_{i}< &
\frac{1}{ LT}R_{unG}^{relay}(\mathbf{H}_{relay}^{S}), \;\text{and}\;\;\;\label{ach_one_relay}\\
 \sum_{i\in S}R_{i}< &
\frac{1}{ LT}R_{unG}^{dst}(\mathbf{H}_{dst}^{\{S,K+1\}})
,\;\;\;\;\forall S \subseteq \{1,...,K\} \label{ach_one}
\end{align}
respectively, with $R_{unG}^{dst}(\mathbf{H}_{dst}^{\{S,K+1\}})$ and
$R_{unG}^{relay}(\mathbf{H}_{relay}^{S})$  given in (T1.18) and
(T1.19) in Table \ref{notations_table}. The channel matrix from the
users in the set $S$ and the relay to the destination
$\mathbf{H}^{\{S,K+1\}}_{dst}$ is formed from
$\mathbf{H}_{dst}=[\mathbf{H}^d_1,\ldots ,\mathbf{H}^d_{K+1}]$ as in
(T1.17) with $\mathbf{H}^d_i$ given in \eqref{eqd.7} and
\eqref{eq.relaychannel}, and the channel matrix from the users in
the set $S$ to the relay $\mathbf{H}_{relay}^{S}$ is defined
similarly to $\mathbf{H}^{\{S,K+1\}}_{dst}$.

\end{thm}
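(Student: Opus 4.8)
The plan is to establish the relay constraints (\ref{ach_one_relay}) and the destination constraints (\ref{ach_one}) separately. For the relay, the observation (\ref{eq_y_relay}) is a pure $K$-user lattice multiple-access channel in which the relay codebook plays no role, so I would run the $K$-stage coset decoder of Table \ref{table.tree} over $\Lambda_{C_u}$ alone; the resulting analysis is literally the destination analysis below with the relay block of every channel matrix deleted, returning (\ref{ach_one_relay}) with $\mathbf{H}_{dst}^{\{S,K+1\}}$ replaced by $\mathbf{H}_{relay}^{S}$. Hence essentially all of the work is at the destination, and I would condition on the relay having decoded correctly (its complement already being controlled by (\ref{ach_one_relay})) so that in Phase $2$ the relay indeed transmits $\psi^{one}(\bar{\mathbf{c}}_u)$.

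At the destination I would analyze the pairwise error event $\mathbf{w}_t\to\mathbf{w}$ of the coset decoder (\ref{eq.4}) through the differential ambiguity cosets $\mathds{O}^{\psi_\Delta}$ of (\ref{eq.ambiguity}). The difficulty specific to O-MLC is that, because $\psi^{one}$ ties the relay codeword to the users' codewords, $\mathds{O}^{\psi_\Delta}$ is \emph{not} a sublattice of $\Lambda_{C_{ur}}$ nor a product of the per-transmitter lattices, so the point-counting Minkowski--Hlawka estimate used for the lattice decoder in \cite{latticemac} does not apply. The decisive step is Lemma \ref{MHthm1}: instead of counting points of $\mathds{O}^{\psi_\Delta}$ inside the MMSE-GDFE decision region, I would average the error probability jointly over the Loeliger coding-lattice ensemble \emph{and} over the random draw of the one-to-one mapper. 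Averaging over the mapper renders the relay component of each differential codeword $\bar{\mathbf{d}}(\mathbf{w})$ essentially uniform over the relay codebook and independent of its user components, which restores a usable density estimate: the competing differential codewords whose user support is a prescribed subset $S$ are counted as if they lived in the product lattice indexed by $\{S,K+1\}$. Balancing this count against the volume of the decision region yields exactly $\sum_{i\in S}R_i<\frac{1}{LT}R_{unG}^{dst}(\mathbf{H}_{dst}^{\{S,K+1\}})$; the appearance of the relay block in \emph{every} subset constraint is precisely the payoff of $\psi^{one}$ being a bijection, since any nonempty change in the users' messages necessarily moves the relay's codeword as well.

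A single pass of (\ref{eq.4}) with one filter for the whole super-codeword falls short of these bounds — the gap is quantified in Corollary \ref{coro_ach_one} — so I would recover the region with the tree of Table \ref{table.tree}. The mechanism is successive re-tuning of the MMSE-GDFE filter: at a node $(k,j)$ whose ancestors have been cancelled in Step B.1, the residual observation is a lattice MAC over the remaining users $\mathcal{S}^{(k,j)}$ and the relay, and the Lemma \ref{MHthm1} bound applied to this reduced channel delivers the loss-free constraint $R_{unG}^{dst}(\mathbf{H}_{dst}^{\{S,K+1\}})$ for $S\subseteq\mathcal{S}^{(k,j)}$. I would then invoke the polymatroid structure of the constraint set (\ref{ach_one}): for the target rate point there is a decoding order, i.e. a root-to-leaf path in Fig. \ref{fig.tree}, along which each successive stage — with previously decoded users cancelled and the filter re-tuned — meets its constraint in (\ref{ach_one}) (using time sharing across orders when the target is not a vertex). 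Along this path every cancellation is correct with high probability, so its leaf candidate equals $\mathbf{w}_t$. For Step C I would analyze the \emph{suboptimal} rule that simply outputs such a correct-path candidate; since the optimal nearest-$\mathbf{H}_{dst}\hat{\mathbf{x}}$ rule of Table \ref{table.tree} can only do better, achievability follows, and an expurgation over the ensemble extracts one deterministic O-MLC meeting (\ref{ach_one_relay})--(\ref{ach_one}).

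The main obstacle is Lemma \ref{MHthm1}: obtaining a Minkowski--Hlawka-type density bound for $\mathds{O}^{\psi_\Delta}$ despite its lack of lattice structure. Everything downstream — the decision-region volume computation, the union bound over subsets $S$, and the inductive survival of a correct path — parallels the lattice-MAC and polymatroid arguments and is routine by comparison, but none of it can be initiated until the mapper-averaging device of Lemma \ref{MHthm1} supplies a tractable substitute for the forbidden point count over $\mathds{O}^{\psi_\Delta}$.
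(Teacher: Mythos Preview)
Your identification of Lemma~\ref{MHthm1} as the linchpin, and of the joint averaging over the Loeliger ensemble and the random one-to-one mapper as the device that forces the relay block into every subset constraint, is exactly right and matches the paper. The gap is in how you extract the full region (\ref{ach_one}) from the tree.

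You state that at a node $(k,j)$ the Lemma~\ref{MHthm1} bound ``delivers the loss-free constraint $R_{unG}^{dst}(\mathbf{H}_{dst}^{\{S,K+1\}})$ for $S\subseteq\mathcal{S}^{(k,j)}$.'' It does not: the lemma controls only the set-$\mathcal{S}^{(k,j)}$ error of Definition~\ref{Def_setS}, i.e.\ the event that \emph{every} remaining user is wrong, and this vanishes under the single full-set constraint $\sum_{i\in\mathcal{S}^{(k,j)}}R_i<\frac{1}{LT}R_{unG}^{dst}(\mathbf{H}_{dst}^{\{\mathcal{S}^{(k,j)},K+1\}})$. For proper subsets the one-stage coset decoder suffers precisely the loss terms of Corollary~\ref{coro_ach_one}. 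Hence your polymatroid step fails: from ``at least one remaining user is correct'' you cannot conclude that a \emph{prescribed} user $\pi(k)$ is correct, so you cannot commit to a single decoding order a priori, and the time-sharing reduction over orders never gets off the ground.

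The paper's mechanism is different. The decoder never learns which user is correct at a given stage; it spawns all $K-k+1$ children, and the induction asserts that \emph{some} root-to-leaf path---random, determined by the noise realization---has all cancellations correct. As this unknown good path wanders through the tree, the full-set constraint at each visited node ranges over every subset of $\{1,\ldots,K\}$, which is why all of (\ref{ach_one}) is required. Step~C then cannot ``simply output the correct-path candidate,'' because the decoder cannot identify it; the paper instead runs an explicit pairwise tournament among the $K!$ leaf candidates, where each comparison is a coset-distance test over the set $(\mathcal{S}_c)^c$ of users on which the two candidates disagree, and a further application of the Lemma~\ref{MHthm1} bound (with $S=(\mathcal{S}_c)^c$) shows that the true $\mathbf{w}_t$ survives every such comparison.
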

\begin{proof}
We  prove only (\ref{ach_one}) here since (\ref{ach_one_relay})
follows similarly. First, for the first stage ($k=1$, the root node
of Fig. \ref{fig.tree}) of the candidate generation process in Step
B of Table \ref{table.tree}, we show that at least one of the users'
messages is  correctly decoded in the generated ``super''-message
$\hat{\mathbf{w}}_{t}^{(1,1)}$ of all users (with probability 1) as
$T\rightarrow\infty$. To do this, we first define the following
error event.
\begin{definition} [set-${S}$ error] \label{Def_setS}
A decoded super-message $\mathbf{w}$ is with set-${S}$ error if the
message in $\mathbf{w}$ for every user $i$, where $i\in {S}$, is
different from the corresponding transmitted message. That is,
$\bw_i \neq (\bw_t)_{i}, \forall i \in {S}$, while $\bw_i =
(\bw_t)_{i}$, otherwise.
\end{definition}

\noindent Let $P_{e}({S}|\mathbf{H}_{dst})$ be the probability that
there exists $\mathbf{w}$ with set-${S}$ error with fixed
$\mathbf{H}_{dst}$, and $\min_{\bc \in
\mathbf{o}(\mathbf{w})}\mathrm{M}(\bc)\leq\min_{\bc \in
\mathbf{o}(\mathbf{w}_{t})}\mathrm{M}(\bc)$, with
$\mathrm{M}(\mathbf{c})$ defined in \eqref{eq.4} and $
\mathbf{o}(\mathbf{w})$ being the coset of $\mathbf{w}$. To validate
our claim, we first consider the erroneous user set
$\mathcal{S}^{(1)}=\{1,...,K\}$ and will prove that
$P_{e}(\mathcal{S}^{(1)}|\mathbf{H}_{dst}) \rightarrow 0$ for the
first-stage, if the transmission rates $R_{i}$ satisfy
(\ref{ach_one}) and the lattice codes are good as defined in the
upcoming Lemma \ref{MHthm1}. Here
$P_e(\mathcal{S}^{(1)}|\mathbf{H}_{dst})$ is averaged over { the
random relay-mapper and linear-code ensemble
$\mathcal{E}_{\psi,C^{Lo}}=\{\psi^{one},C^{Lo}_{ur}\}$ consisting of
all possible one-to-one mappers $\psi^{one}$ and Loeliger linear
codes $C^{Lo}_{ur}$ of the users and relay ((T1.6) in Table
\ref{notations_table}).}
\begin{lemma}
\label{MHthm1} For O-MLC, let $R_i,$ $i=1,...,K+1$, be the code
rates for the users and the relay, and $\{\Lambda_{C_{i}}\}$ belong
to the Loeliger lattices ensembles (cf. Definition \ref{Def_logier}
in Appendix \ref{MHthm1proof}-(I)). For stage $k=1$ of Step B in
Table \ref{table.tree}, as $LT\rightarrow\infty$, the
set-$\mathcal{S}^{(1)}$ error probability (cf. Definition
\ref{Def_setS}), where $\mathcal{S}^{(1)}=\{1,...,K\}$, satisfies
\begin{equation}
\begin{split}
& P_e(\mathcal{S}^{(1)}|\mathbf{H}_{dst}) \leq
\frac{1}{|\mathcal{E}_{\psi,C^{Lo}}|}\sum_{(\psi^{one},C^{Lo}_{ur})\in
\mathcal{E}_{\psi,C^{Lo}}}\left|\mathds{O}_{\mathcal{S}^{(1)}}^{\psi^{one}_{\Delta}}\cap
\mathcal{R}_{\beta}\right| \\ & \leq \exp\Bigg(\frac{-LT}{\log
e}\Bigg[\frac{1}{LT} R_{unG}^{dst}(
\mathbf{H}_{dst}^{\{\mathcal{S}^{(1)},K+1\}})-\sum_{i\in
\mathcal{S}^{(1)}}R_{i}
+\frac{1}{LT}\log\frac{2^{R_{K+1}LT}-1}{2^{R_{K+1}LT}}\Bigg]
\Bigg) \label{Minkowmarc2}
\end{split}
\end{equation}
where $\mathds{O}_{\mathcal{S}^{(1)}}^{\psi^{one}_{\Delta}}$
consists of points belonging to the differential ambiguity cosets
for O-MLC $\mathds{O}^{\psi^{one}_{\Delta}}$ (cf.
\eqref{eq.ambiguity}), with corresponding messages having
set-$\mathcal{S}^{(1)}$ errors; $
\mathcal{E}_{\psi,C^{Lo}}=\{\psi^{one},C^{Lo}_{ur}\}$ is defined
right before Lemma \ref{MHthm1}; the decision region
$\mathcal{R}_{\beta}\triangleq
\left\{\mathbf{v}:|\mathbf{B}_{dst}\mathbf{v}|^{2}\leq
(KM_{u}+M_{r})LT(1+\beta)\right\} $ with filter $\mathbf{B}_{dst}$
defined as in \eqref{eq.4} and $\beta>0$; and the rate constraint
$R_{unG}^{dst}(\mathbf{H}_{dst}^{\{\mathcal{S}^{(1)},K+1\}})$ is
defined as in \eqref{ach_one}.
\end{lemma}
The proof of Lemma \ref{MHthm1} is given in Appendix
\ref{MHthm1proof}. The main difficulty is that the cosets
$\mathds{O}^{\psi^{one}_{\Delta{}}}$ is not a direct product of
$K+1$ lattices as in \cite{latticemac},  so the methods in
\cite{latticemac} and \cite{avgboundlattice} \textit{cannot} be
directly applied to counting the number of points of
$\mathds{O}_{\mathcal{S}^{(1)}}^{\psi^{one}_{\Delta}}$ in the
decision region $\mathcal{R}_{\beta}$ in the second inequality of
\eqref{Minkowmarc2}. We avoid explicitly counting points in
$\mathds{O}_{\mathcal{S}^{(1)}}^{\psi^{one}_{\Delta}}$ by developing
new upper-bounds as in \eqref{thm1pf0} and \eqref{thm1pf2} in
Appendix \ref{MHthm1proof}. Otherwise, naively applying the methods
of  \cite{latticemac} and \cite{avgboundlattice} will result in
rates as in \eqref{Minkowmarc2} but without the factor
${(2^{R_{K+1}LT}-1)}$ cancelling out $2^{R_{K+1}LT}$, and lead to
significant rate loss compared with our \eqref{ach_one} with
${S}=\mathcal{S}^{(1)}$ since $R_{K+1}=\sum_{i=1}^{K}R_{i}$ is
required  to ensure the one-to-one mapping.

With the results for the first stage $k=1$ in Lemma \ref{MHthm1}, we
show by induction that after the candidate generation process in
Step B of Table \ref{table.tree}, among all ``super''-message
$\hat{\mathbf{w}}_{t}^{(K,j)}$ at  stage $K$ (defined in Step C),
there exists a correct one almost surely (with probability 1) as
$T\rightarrow\infty$.  To do this, we will show that for stage $k$,
with at least $k-1$ (almost surely) correctly decoded users from the
previous stage, almost surely there exists one node $(k,j_{k}')$
having at least $k$ users correctly decoded. Note that for stage
$k$, conditioned on the event that all decoded users' messages from
the previous stages are correct, the noise $\mathbf{n}_{dst}$ in
\eqref{eqYdst} may \textit{no longer} be Gaussian
\cite{macratesplit}. However, under the condition  \eqref{ach_one},
the probability $P^{(k)}_e$ that there exists no node  at stage $k$
having at least $k$ users correctly decoded can  be shown to still
satisfy
\begin{equation} \label{eq_Pek}
P^{(k)}_e \overset{(a)}{\leq}
P_e(\mathcal{S}^{(1)}|\mathbf{H}_{dst})+\sum_{s=2}^{k}P^G_{e}(\mathcal{S}^{(s,j'_{s})}|\mathbf{H}_{dst},\mathcal{S}_{p}^{(s,j'_{s})})
\overset{(b)}{\rightarrow} 0,
\end{equation}
as $LT \rightarrow \infty$, where
$P^G_{e}(\mathcal{S}^{(s,j'_{s})}|\mathbf{H}_{dst},\mathcal{S}_{p}^{(s,j'_{s})})$
is defined under Gaussian $\mathbf{n}_{dst}$ and will be given below
and (\ref{eq_Pek} a) follows from \cite{macratesplit}. Then our
claim for stage $K$ is valid  and $P^{(K)}_e \rightarrow 0$ by
induction. Since under \eqref{ach_one}, as $LT \rightarrow \infty$,
$P_e(\mathcal{S}^{(1)}|\mathbf{H}_{dst}) \rightarrow 0$ from
\eqref{Minkowmarc2}, we will show that
$P^G_e(\mathcal{S}^{(s,j'_{s})}|\mathbf{H}_{dst},\mathcal{S}_{p}^{(s,j'_{s})})\rightarrow
0, \forall s \leq k$, {  in this setting} to validate (\ref{eq_Pek}
b). Now we introduce the definition of
$P^G_e(\mathcal{S}^{(s,j'_{s})}|\mathbf{H}_{dst},\mathcal{S}_{p}^{(s,j'_{s})})$
as follows. Let $\mathcal{S}_{p}^{(s,j'_{s})}$ be the set of $s-1$
previous users along the path starting from the root node to node
$(s,j'_s)$, the $j'_s$-th node at stage $s$, in the decoding tree
shown in Fig. \ref{fig.tree}. Also, let the set
$\mathcal{S}^{(s,j'_{s})}$ be
$\{1,\ldots,K\}\setminus\mathcal{S}_{p}^{(s,j'_{s})}$. Then
$P^G_e(\mathcal{S}^{(s,j'_{s})}|\mathbf{H}_{dst},\mathcal{S}_{p}^{(s,j'_{s})})$
is defined as the probability  that there exists $\mathbf{w}$ with
set-$\mathcal{S}^{(s,j'_{s})}$ error (Definition \ref{Def_setS})
conditioned on the event that all users in
$\mathcal{S}_{p}^{(s,j'_{s})}$ are correct (the existence of
$j'_{s}$ is guaranteed by the assumption of induction
$P_{e}^{(s-1)}\rightarrow0$, $1<s \leq k$), and $\mathbf{n}_{dst}$
in \eqref{eqYdst} is conditionally Gaussian. For this kind of error
events,  $\min_{\bc \in
\mathbf{o}(\mathbf{w})}\mathrm{M}^{(s,j'_s)}(\bc)\leq\min_{\bc \in
\mathbf{o}(\mathbf{w}_{t})}\mathrm{M}^{(s,j'_s)}(\mathbf{c})$ with
$\mathrm{M}^{(s,j'_s)}(\mathbf{c})$ defined on the right-hand side
(RHS) of (T2.1) in Table \ref{table.tree}. As in the proof for Lemma
\ref{MHthm1} in Appendix \ref{MHthm1proof}, we can  similarly
upper-bound
$P^G_e(\mathcal{S}^{(s,j'_s)}|\mathbf{H}_{dst},\mathcal{S}_{p}^{(s,j'_s)})$
by the RHS of (\ref{Minkowmarc2}) with $\mathcal{S}^{(1)}$ replaced
by $\mathcal{S}^{(s,j'_s)}$. Thus if the transmission rates $R_{i}$
satisfy (\ref{ach_one}), as $T\rightarrow\infty$, we have that
$P^G_e(\mathcal{S}^{(s,j'_s)}|\mathbf{H}_{dst},\mathcal{S}_{p}^{(s,j'_s)})\rightarrow
0$, which verifies (\ref{eq_Pek} b). This validates  our claim for
stage $K$.

For the Step C of Table \ref{table.tree}, we will use the following
suboptimal decoder instead of the optimal decoder in Table
\ref{table.tree} to prove that we can find the correct message
$\bw_t$ almost surely. First, we compare candidates
$\hat{\mathbf{w}}_{t}^{(K,1)}$ and $\hat{\mathbf{w}}_{t}^{(K,2)}$,
and form the set of users $\mathcal{S}_{c}$ so that for any $i \in
\mathcal{S}_c$, $\hat{\mathbf{w}}_{t}^{(K,1)}$ and
$\hat{\mathbf{w}}_{t}^{(K,2)}$ have a common message for user $i$.
Then we compare the ``coset''-distances
$\min_{\mathbf{c}\in\mathbf{o}(\hat{\mathbf{w}}_{t}^{(K,1)})}\mathrm{D}^{(k_c)}({\mathbf{c}})$
and
$\min_{\mathbf{c}\in\mathbf{o}(\hat{\mathbf{w}}_{t}^{(K,2)})}\mathrm{D}^{(k_c)}({\mathbf{c}})$
of these two candidates and choose the one with smaller
``coset''-distance (if equal, we randomly select one), where
$\mathrm{D}^{(k_c)}(\bc)$ is formed by replacing
$\mathcal{S}_p^{(k,j)}$ with $\mathcal{S}_{c}$ in
$\mathrm{M}^{(k,j)}(\bc)$ in (T2.1) of Table \ref{table.tree} (also
the corresponding parameters). We then compare the chosen candidate
with the next candidate $\hat{\mathbf{w}}_{t}^{(K,3)}$, and so on.
After $K!-1$ comparisons among total $K!$ candidates, the final
chosen candidate in the final comparison will be declared as the
decoded message. Now we show that the error probability of the above
sub-optimal decoder will approach zero. As in (\ref{eq_Pek} a), this
error probability is upper-bounded by
$P^{(K)}_e+P^G_{e}(\hat{\bw}_t^{(K,j)}|\hat{\bw}_t^{(K,
j'_{K})}=\bw_t)$, where $P^{(K)}_e$ is defined before \eqref{eq_Pek}
and $P^G_{e}(\hat{\bw}_t^{(K,j)}|\hat{\bw}_t^{(K,j'_K)}=\bw_t)$ is
the probability that the sub-optimal decoder outputs incorrect
$\hat{\bw}_t^{(K,j)} \neq \bw_t$ conditioned on the event that there
is one correct candidate $\hat{\bw}_t^{(K,j'_K)}=\bw_t$ and the
noise $\mathbf{n}_{dst}$ is Gaussian. Since $P^{(K)}_e \rightarrow
0$ according to the previous paragraph, we will show
$P_{e}(\hat{\bw}_t^{(K,j)}|\hat{\bw}_t^{(K,j'_K)}=\bw_t) \rightarrow
0$ and then our proof is complete. Specifically, if the decoder
output $\hat{\bw}^{(K,j)} \neq \bw_t$, it will have smaller (or
equal) ``coset''-distance than that of $\hat{\bw}^{(K,j'_K)}=\bw_t$,
i.e.,
$\min_{\mathbf{c}\in\mathbf{o}(\hat{\mathbf{w}}_{t}^{(K,j)})}\mathrm{D}^{(k_c)}({\mathbf{c}})
\leq
\min_{\mathbf{c}\in\mathbf{o}(\hat{\bw}^{(K,j'_K)})}\mathrm{D}^{(k_c)}({\mathbf{c}})$,
and now $\mathcal{S}_{c}$ becomes the set of correctly decoded users
in $\hat{\bw}^{(K,j)}$ since it is the set of users with common
messages for both $\hat{\bw}_t^{(K,j)}$ and the correct
$\hat{\bw}^{(K,j'_K)}=\mathbf{w}_{t}$. That is, message
$\hat{\bw}_t^{(K,j)}$ may have set-$(\mathcal{S}_{c})^{c}$ error
(Definition \ref{Def_setS}) given that the users in
$\mathcal{S}_{c}$ are correct, where
$(\mathcal{S}_{c})^{c}=\{1,\ldots,K\}\setminus \mathcal{S}_{c}$.
However, from the derivations in the previous paragraph, conditioned
on the event that  users in $\mathcal{S}_{c}$ are correct, the
probability of set-$(\mathcal{S}_{c})^{c}$ error $
P^G_e((\mathcal{S}_{c})^{c}|\mathbf{H}_{dst},\mathcal{S}_{c})
\rightarrow 0$. This is a contradiction and
$P^G_{e}(\hat{\bw}_t^{(K,j)}|\hat{\bw}_t^{(K,j'_K)}=\bw_t)
\rightarrow 0$. Thus our suboptimal decoder will always find the
correct $\mathbf{w}_t$, and this concludes our proof since the
optimal decoder in Table \ref{table.tree} will perform even better.
\end{proof}

If only the one-stage coset decoder is used as in \cite{latticemac},
we have the following.
\begin{corollary}
\label{coro_ach_one} For the MARC in (\ref{eq_y_relay}) and
(\ref{eqYdst}), the rate region constrained by
\eqref{ach_one_relay_onestage} and \eqref{ach_one_1stage}, which is
strictly smaller than that in Theorem \ref{Thm_ach_one}, is
achievable by  O-MLC with the one-stage coset decoder in
\eqref{eq.4}, where
\begin{align}
\sum_{i\in S}R_{i}< &
\frac{1}{LT}R_{unG}^{relay}(\mathbf{H}_{relay}^{S})-M_{u}|S|\log\frac{K}{|S|}
\;\; \mbox{and} ,\;\;
\label{ach_one_relay_onestage} \\
 \sum_{i\in S}R_{i}< &
\frac{1}{LT}R_{unG}^{dst}(\mathbf{H}_{dst}^{\{S,K+1\}})-(M_{u}|S|+M_{r})\log\frac{KM_{u}+M_{r}}{|S|M_{u}+M_{r}}
,\;\;\;\;\forall S \subseteq \{1,...,K\}. \label{ach_one_1stage}
\end{align}
\end{corollary}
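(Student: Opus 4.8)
The plan is to reduce the one-stage decoding error to a union of set-$S$ error events and then re-run the point-counting argument of Lemma \ref{MHthm1} for each subset $S$ separately, the essential difference being that a single MMSE-GDFE filter $\mathbf{B}_{dst}$ (designed for the full $(KM_u+M_r)LT$-dimensional problem) must now serve every $S$, rather than the stage-matched filters available to the $K$-stage decoder. First I would note that the one-stage decoder \eqref{eq.4} errs exactly when some $\mathbf{w}\neq\mathbf{w}_t$ satisfies $\min_{\mathbf{c}\in\mathbf{o}(\mathbf{w})}\mathrm{M}(\mathbf{c})\leq\min_{\mathbf{c}\in\mathbf{o}(\mathbf{w}_t)}\mathrm{M}(\mathbf{c})$; sorting such $\mathbf{w}$ by the set $S$ of users on which it differs from $\mathbf{w}_t$ yields $P_e\leq\sum_{\emptyset\neq S\subseteq\{1,\ldots,K\}}P_e(S|\mathbf{H}_{dst})$. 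As there are only finitely many subsets, it suffices to send each $P_e(S|\mathbf{H}_{dst})\to 0$.

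Fixing $S$, the set-$S$ differential codewords in $\mathds{O}_{S}^{\psi^{one}_{\Delta}}$ vanish on the users outside $S$ and, because $\psi^{one}$ is one-to-one, carry a nonzero relay component; hence they lie in the $2(|S|M_u+M_r)LT$-dimensional subspace indexed by $\{S,K+1\}$. I would bound $|\mathds{O}_{S}^{\psi^{one}_{\Delta}}\cap\mathcal{R}_{\beta}|$ by the same Minkowski--Hlawka estimate and random-mapper averaging used for the second inequality of \eqref{Minkowmarc2} (the averaging again supplying the harmless factor $(2^{R_{K+1}LT}-1)/2^{R_{K+1}LT}\to 1$). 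The novelty is that $\mathcal{R}_{\beta}=\{\mathbf{v}:|\mathbf{B}_{dst}\mathbf{v}|^{2}\leq(KM_u+M_r)LT(1+\beta)\}$ is sized for the full problem while the lattice points occupy the smaller $\{S,K+1\}$ subspace, so the count acquires an extra volume factor relative to the dimension-matched region implicitly used in Theorem \ref{Thm_ach_one}.

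The hard part is evaluating this volume factor, and it is exactly what produces the penalty. Write $d_S\triangleq 2(|S|M_u+M_r)LT$. Intersecting $\mathcal{R}_{\beta}$ with the $d_S$-dimensional $\{S,K+1\}$ subspace gives a ball whose squared radius is the oversized $(KM_u+M_r)LT(1+\beta)$; by the Euclidean ball-volume formula and Stirling this volume is $\approx\left(2\pi e(KM_u+M_r)LT(1+\beta)/d_S\right)^{d_S/2}=\left(\pi e(1+\beta)(KM_u+M_r)/(|S|M_u+M_r)\right)^{(|S|M_u+M_r)LT}$, whereas the dimension-matched ball (squared radius $(|S|M_u+M_r)LT(1+\beta)$) has volume $\approx\left(\pi e(1+\beta)\right)^{(|S|M_u+M_r)LT}$. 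Their ratio is $\left((KM_u+M_r)/(|S|M_u+M_r)\right)^{(|S|M_u+M_r)LT}$, so the extra factor enlarges the point count and thereby subtracts exactly $(M_u|S|+M_r)\log\frac{KM_u+M_r}{|S|M_u+M_r}$ from the exponent of \eqref{Minkowmarc2} (with $\mathcal{S}^{(1)}$ replaced by $S$). Hence $P_e(S|\mathbf{H}_{dst})\to 0$ as $LT\to\infty$ precisely when $\sum_{i\in S}R_i<\frac{1}{LT}R_{unG}^{dst}(\mathbf{H}_{dst}^{\{S,K+1\}})-(M_u|S|+M_r)\log\frac{KM_u+M_r}{|S|M_u+M_r}$, which is \eqref{ach_one_1stage}.

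The relay bound \eqref{ach_one_relay_onestage} follows identically: at the relay the decoder searches only the user lattice $\Lambda_{C_u}$, so the full dimension is $2KM_uLT$, the set-$S$ subspace is $2|S|M_uLT$, and the same volume computation yields the penalty $M_u|S|\log\frac{K}{|S|}$, now without an $M_r$ term since no relay codeword enters. Finally, strict containment in Theorem \ref{Thm_ach_one} is immediate: for $S=\{1,\ldots,K\}$ the log-ratio is $\log 1=0$ and the constraint coincides with \eqref{ach_one}, but for every proper nonempty $S$ the factor $\log\frac{KM_u+M_r}{|S|M_u+M_r}>0$ makes the corresponding constraint strictly tighter, so the rate region shrinks.
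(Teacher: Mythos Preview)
Your proposal is correct and follows essentially the same approach as the paper: the paper's proof is a one-sentence sketch saying to modify Lemma~\ref{MHthm1} by counting all points in $\mathds{O}^{\psi^{one}_{\Delta}}$ rather than only the set-$\mathcal{S}^{(1)}$ ones, and your union-bound decomposition over $S$ together with the volume-ratio computation makes explicit exactly how this modification produces the penalty terms. Your identification of the extra factor as $\big((KM_u+M_r)/(|S|M_u+M_r)\big)^{(|S|M_u+M_r)LT}$ matches the paper's own later use of it in \eqref{eq.3}, so you have correctly filled in the details the paper omits.
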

The proof can be easily  obtained by modifying Lemma \ref{MHthm1},
in which we count all of the  points in cosets
$\mathds{O}^{\psi^{one}_{\Delta{}}}$ instead of only counting those
corresponding to the message with set-$\mathcal{S}^{(1)}$
 error (Definition \ref{Def_setS}) as in
$\mathds{O}_{\mathcal{S}^{(1)}}^{\psi^{one}_{\Delta}}$ of
\eqref{Minkowmarc2}, and follows arguments similar to those used in
Theorem \ref{MHthm1}. The details are omitted here. Clearly,
compared to the rate region in (\ref{ach_one_relay}) and
(\ref{ach_one}), there are rate loss terms
$M_{u}|S|\log\frac{K}{|S|}$ and
$(M_{u}|S|+M_{r})\log\frac{KM_{u}+M_{r}}{M_{u}|S|+M_{r}}$ in
(\ref{ach_one_relay_onestage}) and (\ref{ach_one_1stage}),
respectively. These losses are zero when $|S|=K$, and the MMSE-GDFE
processing for the one-stage coset decoding in \eqref{eq.4} is only
sum rate optimal.

For  MS-MLC, we have the following theorem. In this result, in
addition to the same rate constraints (\ref{ach_one_relay}) and
(\ref{ach_one}) as in Theorem \ref{Thm_ach_one}, there is an
additional rate constraint (\ref{ach_modsum2}) for MS-MLC which
makes the achievable rate region smaller than that for O-MLC.
\begin{thm}
\label{Thm_mod_sum} For the MARC in (\ref{eq_y_relay}) and
(\ref{eqYdst}), using MS-MLC and the $K$-stage coset decoder in
Table \ref{table.tree}, the rate region with constraints in
(\ref{ach_one_relay}) and (\ref{ach_one}) and the following
additional constraint (\ref{ach_modsum2}) is achievable, where
\begin{equation}
\sum_{i\in S}R_{i}<
\frac{1}{LT}R_{unG}^{dst}(\mathbf{H}_{dst}^{S})-M_{u}|S|\log\frac{|S|M_{u}+M_{r}}{|S|M_{u}}+
R_{K+1} \;\;\;\forall S \subseteq \{1,...,K\}, |S|>1
\label{ach_modsum2}.
\end{equation}
When using MS-MLC with one-stage coset decoder in \eqref{eq.4}, the
rate region with the constraints in \eqref{ach_one_relay_onestage}
and (\ref{ach_one_1stage}) and the following additional constraint
\eqref{ach_modsum2_1stage} is achievable, where
\begin{equation}
\sum_{i\in S}R_{i}<
\frac{1}{LT}R_{unG}^{dst}(\mathbf{H}_{dst}^{S})-M_u|S|\log\frac{KM_{u}+M_{r}}{|S|M_{u}}+R_{K+1}
\;\;\;\forall S \subseteq \{1,...,K\}, |S|> 1.
\label{ach_modsum2_1stage}
\end{equation}

\end{thm}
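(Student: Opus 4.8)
The plan is to follow the proof of Theorem~\ref{Thm_ach_one} almost verbatim, the only genuinely new ingredient being the \emph{collision} events that the modulo-sum mapper $\psi^{mod}$ can create but the bijective $\psi^{one}$ cannot. Since the relay decodes the users directly from $\by_{relay}$ without ever invoking its own codebook, no collision can occur at the relay, so the relay constraints \eqref{ach_one_relay} and \eqref{ach_one_relay_onestage} carry over unchanged from Theorem~\ref{Thm_ach_one} and Corollary~\ref{coro_ach_one}; I would therefore concentrate on the destination. The heart of the argument is an MS-MLC analog of Lemma~\ref{MHthm1}. For a set-$S$ error (Definition~\ref{Def_setS}) I would split the differential ambiguity cosets $\mathds{O}^{\psi^{mod}_{\Delta}}$ (defined as in \eqref{eq.ambiguity}, with $\psi^{mod}$ replacing $\psi^{one}$) into a non-collision part, on which the relay differential $\bar{\mathbf{d}}_{K+1}=\sum_{i\in S}[\psi^{mod}_i(\bar{\mathbf{c}}_i)-\psi^{mod}_i((\bar{\mathbf{c}}_t)_i)]\bmod\Lambda_{S_r}$ is nonzero, and a collision part, on which it vanishes. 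When $|S|=1$ the injectivity of $\psi^{mod}_i$ rules out collisions, which is exactly why the extra constraint \eqref{ach_modsum2} is imposed only for $|S|>1$.

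On the non-collision part every erroneous user message still induces a relay codeword different from the transmitted one, precisely as in O-MLC; counting these points over the ensemble $\{\psi^{mod},C^{Lo}_{ur}\}$ reproduces the estimate of Lemma~\ref{MHthm1} with $\mathcal{S}^{(1)}$ replaced by $S$ (the normalizing factor $\frac{2^{R_{K+1}LT}-1}{2^{R_{K+1}LT}} \to 1$ since $R_{K+1}\ge\max_i R_i$), and yields exactly constraint \eqref{ach_one} with $\mathbf{H}_{dst}^{\{S,K+1\}}$. On the collision part the relay sends the same signal under $\mathbf{w}$ and $\mathbf{w}_t$, so the received-signal difference lives entirely in the $2|S|M_uLT$-dimensional users-only subspace governed by $\mathbf{H}_{dst}^{S}$. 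I would first average over the random mapper: for fixed nonzero user differentials the modulo-sum of the injective images is asymptotically uniform over the $2^{R_{K+1}LT}$ relay coset leaders, so it equals $\mathbf{0}$ with probability $2^{-R_{K+1}LT}(1+o(1))$. This collision probability multiplies the Minkowski--Hlawka count of set-$S$ user differentials inside $\mathcal{R}_\beta$ and is exactly the source of the beneficial $+R_{K+1}$ term in \eqref{ach_modsum2}.

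It remains to evaluate that Minkowski--Hlawka count inside the users-only subspace. Because the collision differentials carry a zero relay component while the MMSE-GDFE filter $\mathbf{B}^{(k,j)}_{dst}$ in (T2.1) is matched to the full $2(|S|M_u+M_r)LT$-dimensional residual signal of stage $k$, the intersection of $\mathcal{R}_\beta$ with the $2|S|M_uLT$-dimensional subspace is enlarged exactly as in the one-stage losses of Corollary~\ref{coro_ach_one}; the determinant bookkeeping then produces the clean rate term $\frac{1}{LT}R_{unG}^{dst}(\mathbf{H}_{dst}^{S})$ together with the dimension-mismatch penalty $M_u|S|\log\frac{|S|M_u+M_r}{|S|M_u}$ (effective dimension $|S|M_u$ against filter dimension $|S|M_u+M_r$), i.e.\ precisely \eqref{ach_modsum2}. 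Since the total error is upper-bounded by the sum of the non-collision and collision contributions, this upper bound vanishes exactly when both \eqref{ach_one} and \eqref{ach_modsum2} hold; as the per-stage bound for a general set $S=\mathcal{S}^{(s,j'_s)}$ is obtained by the identical split, both constraint families accumulate over $S$ as the induction proceeds. The promotion from stage $1$ to all $K$ stages, and the candidate-elimination step C, then go through word for word as in the proof of Theorem~\ref{Thm_ach_one}, since that induction never used the one-to-one property of the mapper. For the one-stage decoder the only modification is that $\mathbf{B}_{dst}$ is matched to the full $2(KM_u+M_r)LT$ signal (no successive cancellation), so the effective dimension $|S|M_u$ is compared against $KM_u+M_r$; repeating the count for both parts gives \eqref{ach_one_1stage} and \eqref{ach_modsum2_1stage}, in direct parallel with Corollary~\ref{coro_ach_one}.

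The hard part will be the collision counting in the lower-dimensional subspace. Two points need care: (i) the collision probability must be controlled over the \emph{joint} mapper-and-lattice ensemble, conditioning on the Loeliger code and averaging over $\psi^{mod}$ first so that the $2^{-R_{K+1}LT}$ factor is genuinely codeword-independent; and (ii) the Minkowski--Hlawka estimate must be carried out for differential codewords constrained to the relay-zero subspace while $\mathbf{B}_{dst}$ is matched to the full dimension, so that the $+R_{K+1}$ gain and the $-M_u|S|\log\frac{|S|M_u+M_r}{|S|M_u}$ penalty are both extracted without double-counting the relay coordinates or losing the asymptotic uniformity of the modulo-sum.
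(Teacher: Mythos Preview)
Your proposal is correct and takes essentially the same approach as the paper: the split into non-collision ($\bar{\mathbf{d}}_r\neq\mathbf{0}$) and collision ($\bar{\mathbf{d}}_r=\mathbf{0}$) parts, the reduction of the former to the Lemma~\ref{MHthm1} bound, and the treatment of the latter in the users-only subspace with a $2^{-R_{K+1}LT}$ prefactor are exactly what the paper does. The only difference is that the paper formalizes your ``asymptotic uniformity'' of the modulo-sum via a \emph{2-partition balanced set} (a direct generalization of Definition~\ref{Def_balance}), computing the two balanced numbers $N_{b,1},N_{b,2}$ by an explicit recursion in $|S|$ rather than a probabilistic argument---this rigorously delivers your point~(i) and yields the identical bound \eqref{ach_modsum2}.
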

\begin{proof}
Unlike O-MLC, there is a possibility for MS-MLC that two different
users' super-codewords are mapped to the same  relay codeword from
Definition \ref{Def_Modulo}. This fact makes the properties
exploited in Lemma \ref{MHthm1} for the random mapped-codebook
ensemble of O-MLC (for details, please see the proof of
\eqref{eq.avgmapper2} in Appendix \ref{MHthm1proof})
 no longer hold for the ensemble for
MS-MLC. Thus Lemma \ref{MHthm1} cannot be applied for MS-MLC. We
solve this problem by dividing the random mapped-codebook
 ensemble for MS-MLC into two partitions, and the techniques
for proving Lemma \ref{MHthm1} can be modified to deal with each
partition separately. The detailed proof is given in  Appendix
\ref{appendix_sketchpfMTH2}. The rate region for one-stage coset
decoder in \eqref{ach_one_relay_onestage}, (\ref{ach_one_1stage})
and \eqref{ach_modsum2_1stage} follows by using techniques similar
to those used in the proof of Corollary \ref{coro_ach_one}.
\end{proof}
The additional rate constraint (\ref{ach_modsum2}) is due to the
ambiguity of the modulo-sum mapper in MS-MLC, where there is a rate
loss term $M_{u}|S|\log\frac{|S|M_{u}+M_{r}}{|S|M_{u}}$. However,
the rate constraint (\ref{ach_modsum2}) can be negligible and even
looser than constraint (\ref{ach_one}), as
$\big(R_{K+1}-M_{u}|S|\log\frac{|S|M_{u}+M_{r}}{|S|M_{u}}\big)$
becomes larger by increasing the relay codebook size $2^{R_{K+1}LT}$
(which reduces the occurrence of ambiguity). Thus  MS-MLC can
approach the performance of O-MLC by increasing the complexity.

Finally, for random slow fading channels, we show that  O-MLC with
the one-stage coset decoder \eqref{eq.4} is DMT optimal for the DDF
MARC, as stated in the following corollary. Despite the rate loss
terms in (\ref{ach_one_relay_onestage}) and (\ref{ach_one_1stage})
compared with (\ref{ach_one_relay}) and (\ref{ach_one}),
respectively, the losses become relatively negligible for the DMT
analysis when the SNR is high.

\begin{corollary}
\label{DMTthm1} For the MARC in (\ref{eq_y_relay}) and
(\ref{eqYdst}), with the one-stage coset decoder \eqref{eq.4}, the
O-MLC achieves the optimal DDF DMT $d(\mathbf{r})$ of
(\ref{eq_y_relay}) and (\ref{eqYdst}), respectively, where
$d({\mathbf{r}})$ is defined in (T1.20) of Table
\ref{notations_table}.
\end{corollary}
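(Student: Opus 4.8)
The plan is to show that the rate region achieved by O-MLC with the one-stage coset decoder, given in \eqref{ach_one_relay_onestage} and \eqref{ach_one_1stage}, contains the outage-defining region of the DDF protocol up to terms that vanish in the DMT sense, so that the proposed structured scheme incurs no loss in diversity gain. The starting observation is that the rate loss terms $M_{u}|S|\log\frac{K}{|S|}$ and $(M_{u}|S|+M_{r})\log\frac{KM_{u}+M_{r}}{|S|M_{u}+M_{r}}$ in \eqref{ach_one_relay_onestage} and \eqref{ach_one_1stage} are \emph{constants} independent of the SNR $\rho$. First I would recall the definition of DMT in (T1.20): we set $R_i(\rho)=r_i\log\rho$, so that each user's rate scales as $\Theta(\log\rho)$, while the unstructured Gaussian rate constraints $R_{unG}^{relay}$ and $R_{unG}^{dst}$ in (T1.18)--(T1.19) also scale like $\Theta(\log\rho)$ through the SNR embedded in $\mathbf{H}_{relay}$ and $\mathbf{H}_{dst}$ (cf. \eqref{eqd.7} and \eqref{eq.relaychannel}). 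Dividing the whole constraint by $\log\rho$ and letting $\rho\to\infty$, the SNR-independent constant rate-loss terms contribute $0$ in the limit, so the multiplexing-gain region defined by O-MLC with the one-stage decoder coincides with that of the unstructured codebook.

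Next I would invoke the DMT-optimality of the DDF protocol for the MARC established in \cite{optimality_ARQ} using unstructured random codebooks and ML decoding. Since DMT is governed by the \emph{outage event}—the event that the instantaneous channel cannot support the target rate tuple $\mathbf{r}\log\rho$—the key step is to argue that the outage event for the structured O-MLC region (after normalizing away the constant gaps) has the same SNR exponent as the outage event for the unstructured region. Concretely, for each subset $S$ the constraint \eqref{ach_one_1stage} differs from \eqref{ach_one} only by an additive constant; hence the probability that the structured constraint is violated is upper-bounded by the probability that the unstructured constraint with a slightly perturbed (but still $\Theta(\log\rho)$) rate target is violated, and by the continuity of the DMT curve in the multiplexing gain, both probabilities share the same exponent $d(\mathbf{r})$ as $\rho\to\infty$. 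The decision-time selection $\ell_1 T$ of the DDF and the relay-outage treatment (footnote \$ of Table \ref{notations_table}) carry over verbatim, since they depend only on the channel realizations and not on the codebook structure.

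The main obstacle will be handling the relay's decoding stage and the half-duplex phase structure consistently in the high-SNR limit. Specifically, the decision time $\ell_1$ is itself a random function of the channel (the earliest slot at which the relay decodes error-free under \eqref{ach_one_relay_onestage}), and one must verify that the constant rate-loss term $M_{u}|S|\log\frac{K}{|S|}$ at the relay does not change $\ell_1$ on the exponential scale, i.e. does not alter which channel realizations are counted as relay-outages in the DMT sense. I would address this by showing that perturbing the relay rate constraint by an $O(1)$ amount shifts $\ell_1$ by at most an $O(1/\log\rho)$ fraction of the block, which is asymptotically negligible, so the joint outage exponent over the two-phase channel \eqref{eqYdst} is preserved. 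Once this is established, combining the matching achievable exponent with the DDF outer bound from \cite{optimality_ARQ} yields $d(\mathbf{r})$ exactly, completing the proof.
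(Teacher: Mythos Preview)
Your high-level intuition is right: the constant rate-loss terms in \eqref{ach_one_relay_onestage} and \eqref{ach_one_1stage} are SNR-independent and therefore vanish on the DMT scale. The paper makes exactly this observation. However, your argument has a genuine gap in how it passes from ``outage regions coincide'' to ``DMTs coincide.'' You treat the problem as a pure outage calculation: show the O-MLC outage set differs from the unstructured outage set only by $O(1)$ shifts, invoke continuity of the DMT curve, and conclude. But DMT achievability is a statement about the \emph{error probability} $P_E(\rho)$ of the actual decoder, not about the outage probability of its rate region. Corollary~\ref{coro_ach_one} only tells you that for a fixed channel and $T\to\infty$ the error probability vanishes inside the region; it says nothing about how fast $P_E$ decays in $\rho$ for finite $T$, which is precisely what the DMT exponent measures. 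The step ``matching achievable exponent'' in your last paragraph is therefore unsupported: you have matched outage exponents, not error-probability exponents.

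The paper closes this gap by working directly with the error probability. It splits $P_E(\rho_d)\le Pr(\mathcal{O})+E_{\mathbf{H}_{dst}}[Pr(Er,\mathcal{O}^c\mid\mathbf{H}_{dst})]$ and then uses the exponential (Minkowski--Hlawka type) bound derived in the proof of Lemma~\ref{MHthm1} to control the second term for \emph{finite} $T$: for each $S$ one gets a bound of the form $\rho_d^{LT\sum_{i\in S}r_i}\exp\big(-R_{unG}^{dst}(\mathbf{H}_{dst}^{\{S,K+1\}})/\log e\big)$ times a constant that absorbs the rate-loss factors $(KM_u+M_r)/(|S|M_u+M_r)$. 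Averaging over $\mathbf{H}_{dst}\in\mathcal{O}^c$ and choosing $T$ large but finite (using continuity of the eigenvalue density) then shows this term is $\doteq Pr(\mathcal{O})$. Without this explicit error-probability bound, your outage-matching argument only recovers the converse direction $P_E\dot\ge Pr(\mathcal{O})$, which holds for any scheme; the achievability direction still needs the decoder-specific analysis. Your treatment of the relay decision time $\ell_1$ is a reasonable heuristic, but in the paper's route it is subsumed by citing the MAC DMT-optimality of the same lattice/coset construction from \cite{latticemac} for the relay channel, again via an error-probability bound rather than an outage comparison.
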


\textit{Sketch of proof:} As in \cite{on_the_achievableDMT} and
\cite{optimality_ARQ}, we need to establish the DMT optimality for
both the relay and destination channels. We focus on the destination
channel (\ref{eqYdst}) since the DMT-optimality for the relay
channel (\ref{eq_y_relay}) (identical to the MAC channel)  has been
proved in \cite{latticemac}. Following \cite{latc} and the proof
steps for \eqref{ach_one_1stage}, we can exponentially upper-bound
the error probability $P_e(\rho_d)$ in (T1.20) of Table
\ref{notations_table} using decoder \eqref{eq.4} (averaged over
random $\mathbf{H}_{dst}$ which satisfy (\ref{ach_one})) as
\begin{equation}
P_e(\rho_d)\dot{\leq}E_{\mathbf{H}_{dst}}\left[(1+\delta) \cdot
\sum_{S\subseteq\{1,...,K\},S\neq\phi}\rho_d^{LT\sum_{i\in
S}r_i}\exp\left[\frac{-1}{\log
e}R_{unG}^{dst}\left(\mathbf{H}_{dst}^{\{S,K+1\}}\right)\right]\right]\doteq
Pr(\mathcal{O}) \label{eq.DMTanalsis1}
\end{equation}
where $\delta> 0$, $\rho_d$ is the received SNR at the destination;
$r_i$ is the given multiplexing gain for user $i$ as in (T1.20); the
exponential larger and equal \cite{DMT_MAC} are denoted as
$\dot{\geq}$ and $\dot{=}$; and $\mathcal{O}$ is the outage event
when $\mathbf{H}_{dst}$ does not satisfy (\ref{ach_one}). The proof
of
(\ref{eq.DMTanalsis1}) is detailed in Appendix {\ref{appendixthmdmtone}}. 
Together with the fact that for any coding schemes, $P_e(\rho_d)
\dot\geq Pr(\mathcal{O})\doteq\rho_{d}^{-d(\mathbf{r})}$ as in
\cite{DMT_MAC},  we prove that O-MLC can achieve the optimal DMT
$d(\mathbf{r})$ for the destination. \hfill $\blacksquare$

In \cite{optimality_ARQ}, a two-user, single antenna node MARC was
studied for the symmetric rate case ($R_{1}=R_{2}$), which showed
that the DDF strategy achieves the optimal DMT for the MARC in the
low to medium multiplexing gain regime. The DMT results of Corollary
\ref{DMTthm1} can be achieved by codebooks, which are more
structured than that in \cite{optimality_ARQ}. Moreover, our designs
in the next section also demonstrate that our theoretical results
can be implemented in practice.

\vspace{-8mm}
\section{Simulation Results}\label{Simulation}
\vspace{-2mm} In this section, we present numerical examples  to
illustrate our theoretical  results. Performance results based on
practical decoders are also presented. As mentioned in Section
\ref{sec.lattice_coding}, the \emph{ lattice decoder} in
\cite{latticemac} and \cite{latc} fails to be directly applicable to
our \emph{coset decoder } of (\ref{eq.4}) since only the  points in
$\mathds{O}^{\psi}$ of (\ref{eq_D_c}) will be searched. In general,
the optimal non-linear relay mapper may make the coset decoders very
complicated and impractical. To facilitate the coset decoder for the
relay mapper, we resort to the sub-optimal \emph{linear} mapper such
that the coset decoder of (\ref{eq.4}) can be transformed into the
efficient lattice decoder. For simplicity, we consider the case in
which there are two users with the same transmission rate, i.e,
$R_{1}=R_{2}=R$.

{ Let the code rate of the relay $R_{3}=2R$, and $\mathbf{G}_{{i}}$,
$i=1,2,3$, be the generation matrix of the coding lattice
$\Lambda_{C_i}$ (cf. Definition \ref{Def_logier}) for transmitter
$i$. Then for user $i=1,2$, the codewords are
$\bar{\mathbf{c}}_{i}=(\mathbf{G}_{{i}}\tilde{\mathbf{z}}_{i} \mod
\Lambda_{S_{i}})$, where $\tilde{\mathbf{z}}_{i}\in
\mathbb{Z}^{2M_uLT}$. For O-MLC, with $M_{r}=2M_{u}$, we choose the
linear relay mapping such that the relay codewords are
$\bar{\mathbf{c}}_{3}=(\mathbf{G}_{{3}}\tilde{\mathbf{z}}_{3} \mod
\Lambda_{S_{3}})$ with
$\tilde{\mathbf{z}}_{3}=[\tilde{\mathbf{z}}_{1}^{T},\tilde{\mathbf{z}}_{2}^{T}]^{T}$.}
After some manipulations, it can be verified that the decoding
equation of (\ref{eq.4})  is transformed into
\begin{equation}
\hat{\mathbf{z}}=\arg \min_{\mathbf{z}\in
\mathbb{Z}^{n}}|\mathbf{F}_{dst}\mathbf{y}_{dst}+(\mathbf{B}_{dst}\mathbf{u}-\mathbf{B}_{dst}\mathbf{Gz})|^{2}
\label{eq.latticedecoder}
\end{equation}
where $n=8M_{u}LT$. Then for the linear  one-to-one relay mapper, we
have
\begin{equation}
\mathbf{G}=diag(\mathbf{G}_{{1}},\mathbf{G}_{{2}},\mathbf{G}_{{3}})\cdot
\begin{pmatrix}
\begin{array}{cc}
\mathbf{I}_{2M_{u}LT} & \mathbf{0} \\
 \mathbf{0} &  \mathbf{I}_{2M_{u}LT} \end{array} & \mathbf{0} \\
\mathbf{I}_{4M_{u}LT} & 2^{\frac{R}{2M_{u}}}\mathbf{I}_{4M_{u}LT}
\end{pmatrix}.
\label{moddecoder}
\end{equation}
 For the linear modulo-sum relay mapper, with
$M_u=M_{r}$, we choose the linear relay mapping such that
$\tilde{\mathbf{z}}_{3}=\tilde{\mathbf{z}}_{1}+\tilde{\mathbf{z}}_{2}$
and the corresponding $\mathbf{G}$ can be similarly derived.
Note now that the decoder searches the whole integer vector plane
$\mathbb{Z}^{n}$ in (\ref{eq.latticedecoder}), thus the lattice
decoder using the  efficient sphere decoding algorithm
\cite{aunified},\cite{onmax} can be applied .

In the following simulation results,  the number of slots is
selected as $L=2$, and  the sum rate $(R_{1}+R_{2})$, is $4$ BPCU.
The relay forwards the message only when the users' messages are
correctly decoded. All the channel links  are Rayleigh faded and
unless otherwise specified,  the sources-to-relay (S-R) channel link
is 10 dB better than the other channel links. In Fig.
\ref{fig.singlecompare},
\begin{figure}[!t] 
\centering
\includegraphics[scale=0.74]{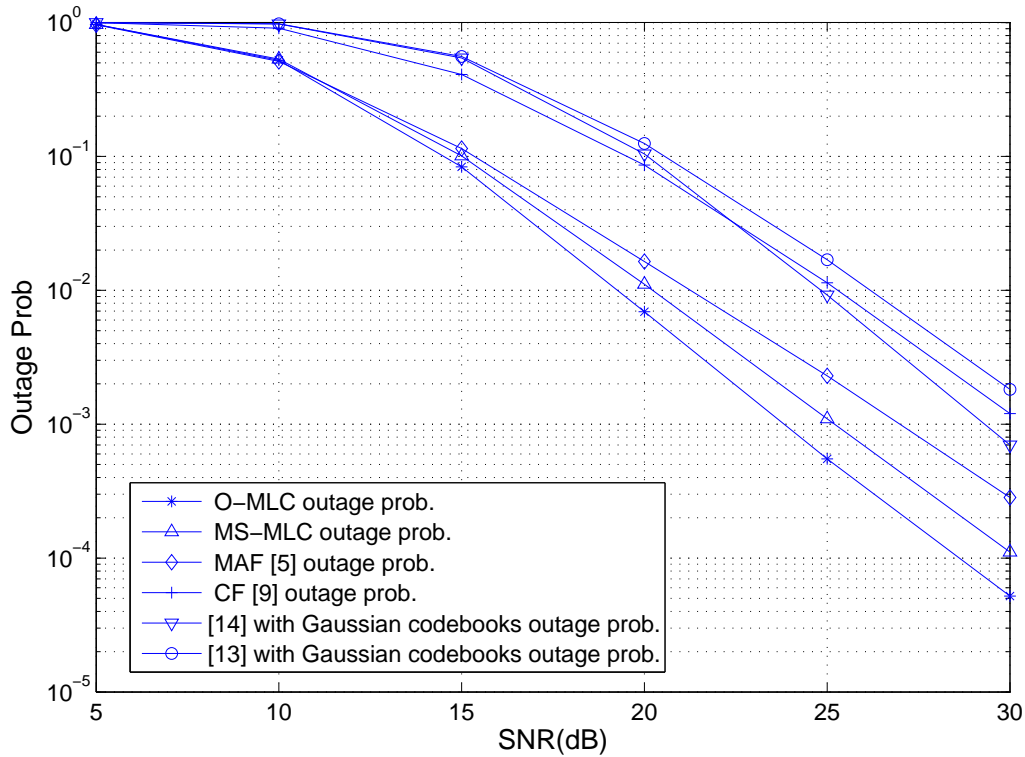}
\caption{The outage probability for  O-MLC (\ref{ach_one_relay}),
(\ref{ach_one}) and  MS-MLC (\ref{ach_one_relay}), (\ref{ach_one}),
(\ref{ach_modsum2}) vs. the protocols in \cite{case_MAC},
\cite{cooperative_wirelss}, \cite{MARCnetowrk2} and
\cite{complexnetwork}.  } 
\label{fig.singlecompare}
\end{figure}
for single-antenna nodes, we show that O-MLC has better error
performance than that of  MS-MLC and both outperform the protocols
of \cite{case_MAC}, \cite{cooperative_wirelss}, \cite{MARCnetowrk2}
and \cite{complexnetwork} in terms of outage probability and achieve
the diversity $\min\{M_{u}(M_{r}+N),(M_{u}+M_{r})N\}$ as expected.
In Fig. \ref{fig.multicompare},
\begin{figure}[t] 
\centering
\includegraphics[scale=0.74]{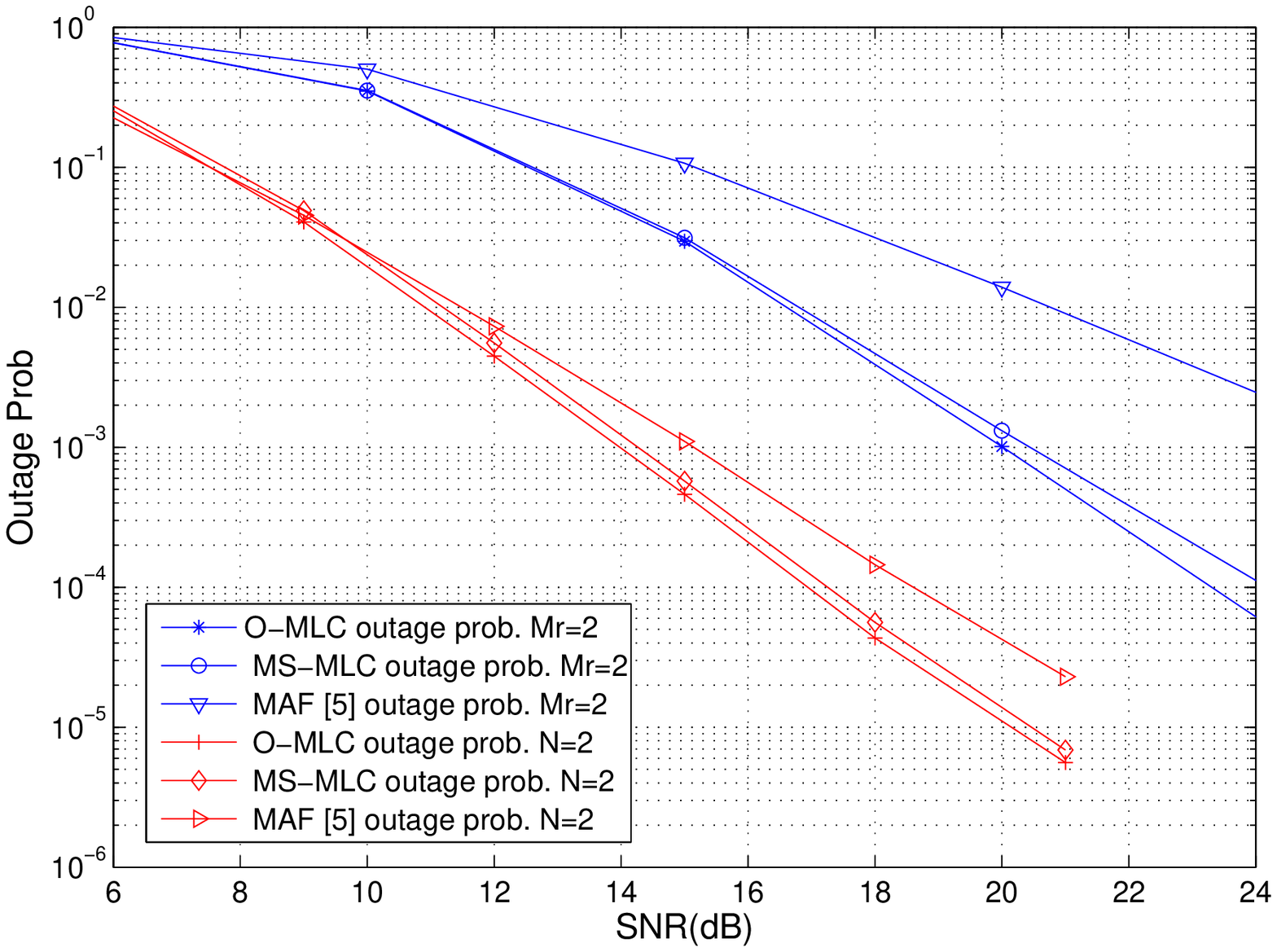}
\caption{The outage probability for  O-MLC (\ref{ach_one_relay}),
(\ref{ach_one}) and  MS-MLC (\ref{ach_one_relay}),
(\ref{ach_one}), (\ref{ach_modsum2}) vs.  MAF \cite{case_MAC}.  } 
\label{fig.multicompare}
\end{figure}
\begin{figure}[!h] 
\centering
\includegraphics[scale=0.7]{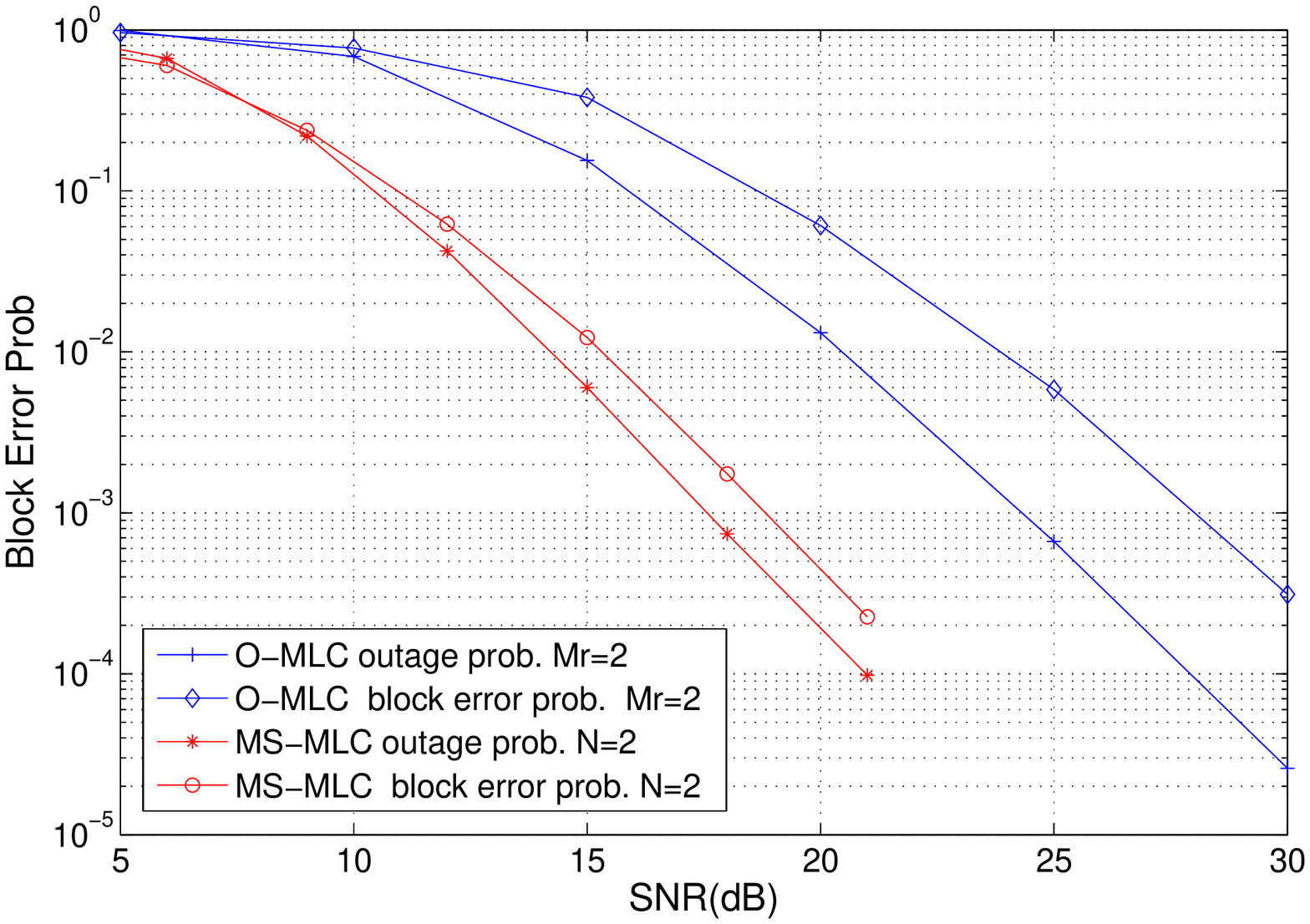}
\caption{Comparison of theoretical outage probabilities  and the
block error probabilities using practical linear relay mapping,
(\ref{ach_one_relay_onestage}), (\ref{ach_one_1stage}) for  O-MLC
and (\ref{ach_one_relay_onestage}),
(\ref{ach_one_1stage}), (\ref{ach_modsum2_1stage}) for  MS-MLC .   } 
\label{fig.latticecompare}
\end{figure}
for the cases $M_{u}=N=1,M_{r}=2$ and $M_{u}=M_{r}=1,N=2$ (where the
S-R link is 15 dB better than the other channel links),
respectively, we show that our proposed coding schemes outperform
the MAF. For the former case, the MAF achieves a diversity of only 2
instead of 3. Note the methods in \cite{cooperative_wirelss},
\cite{MARCnetowrk2} and \cite{complexnetwork}
 cannot be straightforwardly extended to
the   case of multiple-antenna nodes.

 For the simulation of practical lattice codings based on
one-stage practical decoder and linear relay mapper, with  the slot
length $T=2$, we use the pair of self-similar randomly generated
nested lattices drawn from the lattice ensemble defined in
Definition \ref{Def_logier}.
For the settings the same as the above, in Fig.
\ref{fig.latticecompare}, the block error rate for  O-MLC and MS-MLC
are presented. The parameters of the linear codes in the lattice
ensemble for  O-MLC and  MS-MLC are $(p_{i},k_{i}) =(97,3),(47,3),
\forall i$ (cf. Definition \ref{Def_logier}), respectively.
The diversity of $3$ for each user is achieved as expected using
our finite $T$ code construction.

\vspace{-4mm}
\section{Conclusion}
\vspace{-2mm} \label{sec_conclusion} In this work, we have proposed
O-MLC and MS-MLC for structured MARC coding.  The former enjoys
better error performance, while the latter provides more flexibility
to tradeoff between the complexity and the error performance. The
error performance of  MS-MLC can approach that of  O-MLC by
increasing the complexity. We have shown that with the new $K$-stage
decoding instead of the one-stage decoding  considered in previous
works, the structured O-MLC can approach the rate performance of
unstructured codebook with ML decoding. When only the one-stage
decoder is used,  O-MLC can still achieve the optimal DMT of DDF.
Besides the theoretical results, we have also considered the design
of practical short length lattice code with \emph{linear} mapping,
which facilitates the efficient lattice decoding.
 Simulation results have shown
that our proposed coding schemes outperform existing schemes in
terms of outage probabilities. \vspace{-4mm}
\appendix

\vspace{-4mm} \vspace{-3mm}
\subsection{Proof of Lemma \ref{MHthm1}    } \label{MHthm1proof}
\vspace{-3mm} \textbf{(I) Some useful definitions :} Here we
introduce some notation for simplification. We denote the nesting
ratio in Definition \ref{Def_nested} as $\tau_i=2^{R_i/2M_u}$ while
the dimensions of the lattice code are $n_i=2M_uLT$, $(1 \leq i \leq
K)$. The corresponding parameters for the relay are $\tau_{K+1}$ and
$n_{K+1}$, respectively. We also have the following definitions.

\begin{definition} [Loeliger lattices ensemble \cite{avgboundlattice}] \label{Def_logier}
Let $\bar{\Lambda}_{C_i}$ be a lattice generated by a linear code $
C^{Lo}_i$ as
$\bar{\Lambda}_{C_i}\triangleq\{\mathbf{z}\in\mathbb{Z}^{n_i}:\bar{\mathbf{z}}_{p_i}\in
C^{Lo}_{i}\}$, where $\bar{\mathbf{z}}_{p_i}$ is obtained by
applying the componentwise reduction modulo $p_i$ operation on
$\mathbf{z}$ \cite{avgboundlattice} and the $(n_i,k_i)$ linear code
$C^{Lo}_i$ is defined over the finite field $\mathbb{Z}_{p_i}^{n_i}$
((T1.1) in Table \ref{notations_table}). The Loeliger lattices
ensemble is the lattices ensemble
$\{\Lambda_{C_i}=(\gamma_i\bar{\Lambda}_{C^{Lo}_i}):C^{Lo}_i\in
\mathcal{C}_{i,Loe}, \gamma_i\in\mathbb{R}\}$, where
$\mathcal{C}_{i,Loe}$ is a balanced set of linear codes $C^{Lo}_i$
\cite{avgboundlattice}. In our analysis, we let
$p_i\rightarrow\infty$, and
 $\gamma_i\rightarrow 0$  such that
the fundamental volume of $\Lambda_{C_i}$ ((T1.3) in Table
\ref{notations_table})
$V_{f}(\Lambda_{C_i})=p_{i}^{n_i-k_i}\gamma_{i}^{n_i}$ is fixed.
\end{definition}

The following balanced set definition generalizes the balanced set
defined in\cite{avgboundlattice}.

\begin{definition}\emph{(Balanced set for the $K$-user MARC):} \label{Def_balance}
 Let $C$ be the set of
 $\mathbf{c}$ where $\mathbf{c}=\mathbf{c}_{1}\times\cdots \times\mathbf{c}_{K+1}
\in \mathbb{R}^{n_{1}}\times\cdots \times \mathbb{R}^{n_{K+1}}$, and
$\mathcal{C}_{\mathrm{E}}$ be the finite set of ${C}$ (e.g.,
$\mathbf{c}$ is a codeword of a codebook $C$, and
$\mathcal{C}_{\mathrm{E}}$ is a codebook ensemble). We collect all
non-zero $\mathbf{c}$ in ${C}$ of $\mathcal{C}_{\mathrm{E}}$ as
$(\mathcal{C}_{\mathcal{C}_{\mathrm{E}}})^{*}\triangleq\{\mathbf{c}\in
\mathbb{R}^{n}:\mathbf{c}\in C^{*}, C \in
\mathcal{C}_{\mathrm{E}}\}$, where $n=\sum_{i=1}^{K+1}n_{i}$,
$C^{*}=C\setminus\{\mathbf{0}\}$. The set $\mathcal{C}_{\mathrm{E}}$
is called \emph{balanced} if every nonzero element $\mathbf{c}$ in
$(\mathcal{C}_{\mathcal{C}_{\mathrm{E}}})^{*}$ is contained in the
same number,  denoted by $N_{b}$, of $C$  from
$\mathcal{C}_{\mathrm{E}}$. We refer to $N_{b}$ as the
\emph{balanced number}.
\end{definition}

\textbf{(II) Proof:} Here we  show the proof only for the second
inequality of \eqref{Minkowmarc2} since the proof for the first one
is similar to that in
\cite{latticemac}. 
An outline of the proof is provided first to provide  insight into
how to solve the problem that cosets with set-$\mathcal{S}^{(1)}$
errors
$\mathds{O}_{\mathcal{S}^{(1)}}^{\psi^{one}_{\Delta}}\triangleq
\{\mathbf{d}\in
\mathds{O}^{\psi^{one}_{\Delta{}}}:\bar{\mathbf{d}}_{i}\neq
\mathbf{0},\forall i \in \mathcal{S}^{(1)} \}$ (or even cosets
$\mathds{O}^{\psi^{one}_{\Delta{}}}$ in \eqref{eq.ambiguity}) is not
a direct product of $K+1$ lattices, where the differential coset
leader $\bar{\mathbf{d}}_{i}$ for user $i$ is defined below
\eqref{eq.ambiguity} with  (T1.4) and (T1.10) in Table
\ref{notations_table}. First, by averaging over the ensemble of
mappers, and judiciously using the  balanced set  property in
Definition \ref{Def_balance}, we can upper bound
$\frac{1}{|\mathcal{E}_{\psi,C^{Lo}}|}\sum_{(\psi^{one},C_{ur}^{Lo})\in
\mathcal{E}_{\psi,C^{Lo}}}\left|\mathds{O}_{\mathcal{S}^{(1)}}^{\psi^{one}_{\Delta}}\cap
\mathcal{R}_{\beta}\right|$ in (\ref{Minkowmarc2}) using the RHS of
(\ref{thm1pf2} b) below. Note that instead of summation over cosets
$\mathds{O}_{ \mathcal{S}^{(1)}}^{\psi^{one}_{\Delta}}$ as in the
RHS of (\ref{thm1pf0}) below, in the RHS of (\ref{thm1pf2} b) the
summation is over the lattice points of set
$(\Lambda_{C_{ur}})^{\star}$ in \eqref{eq_CurStar} below, which
makes further upper-bounding possible. By taking the limits, we
conclude our proof.

Now we give the details to show the second inequality of
\eqref{Minkowmarc2}. First we introduce some useful notation for the
upcoming (\ref{Minkowmarc1} a). The differential mapper
$\psi_{\Delta}^{one}$, which corresponds to $\psi^{one}$ in
Definition \ref{Def_OneOne}, is defined by replacing the
super-codeword $ \bar{\bc}=[(\bar{\bc}_u)^{T},(\bar{\bc}_r)^T]^T$ in
$\psi^{one}$ with the differential super-codeword
$\bar{\mathbf{d}}(\bw)$ in \eqref{eq.ambiguity}, as in (T1.14) of
Table \ref{notations_table}. Let
$\mathcal{E}_{\psi_{\Delta},C^{Lo}}$ be the ensemble corresponding
to $\mathcal{E}_{\psi,C^{Lo}}$ in \eqref{Minkowmarc2}, but with
one-to-one mappers $\psi_{\Delta}^{one}$ replaced by the
corresponding differential mappers $\psi_{\Delta}^{one}$. Also let
$f(\cdot)$ be the indicator function where $f(\mathbf{d})=1$ if
$\mathbf{d} \in \mathcal{R}_{\beta}$, otherwise ${f}(\mathbf{d})=0$.
Clearly the following (\ref{Minkowmarc1} a) is valid for the
left-hand side (LHS) of the second inequality of \eqref{Minkowmarc2}
since
$|\mathcal{E}_{\psi,C^{Lo}}|=|\mathcal{E}_{\psi_{\Delta},C^{Lo}}|$,
\begin{equation} \label{Minkowmarc1}
\small
\frac{1}{|\!\mathcal{E}_{\psi,C^{Lo}}\!|}\!\!\sum_{(\psi^{one},C^{Lo}_{ur})\in
\mathcal{E}_{\psi,C^{Lo}}}\!\!\!\!\!\!\!\!\!\!\!\!\!\!\left|\mathds{O}_{\mathcal{S}^{(1)}}^{\psi^{one}_{\Delta}}\cap
\mathcal{R}_{\beta}\right|\!\!\overset{(a)}{=}\!\!\!\!\!\!\!\!\!\!\sum_{(\psi_{\Delta}^{one},C^{Lo}_{ur})\in
\mathcal{E}_{\psi_{\Delta},C^{Lo}}}\sum_{\mathbf{d}\in\mathds{O}_{\Delta{},\mathcal{S}^{(1)}}^{\psi^{one}}}\frac{f(\mathbf{d})}{|
\mathcal{E}_{\psi_{\Delta},C^{Lo}}|} \!\!\overset{(b)}{\leq}\!\!
\frac{(\tau_{K+1})^{n_{K+1}}\!\!\prod_{i\in\mathcal{S}^{(1)}
}2^{R_{i}LT}\!\!\!\!}{(\tau_{K+1})^{n_{K+1}}-1}\frac{\int_{\mathbb{R}^{\sum_{i=1}^{K+1}n_{i}}}f(\mathbf{d})d\mathbf{d}}{\prod_{i\in
\{\mathcal{S}^{(1)},K+1\}}V_{f}(\Lambda_{S_{i}})}.
\end{equation}
As for the above (\ref{Minkowmarc1} b), it can be proved from the
RHS of the upcoming (\ref{thm1pf2} b) by averaging over
$\mathcal{C}_{Loe}$ using techniques similar to those in
\cite{latticemac} and \cite{avgboundlattice}. Thus we focus on the
proof of (\ref{thm1pf2} b) below. As pointed out in the beginning of
this appendix, our trick to prove this \textit{critical} step is
replacing the summation over the ``non-lattice'' cosets
$\mathds{O}_{\mathcal{S}^{(1)}}^{\psi^{one}_{\Delta}}$ in the LHS of
(\ref{Minkowmarc1} b) with the set $(\Lambda_{C_{ur}})^{\star}$ in
(\ref{thm1pf2} b) by showing
\begin{align}
&\frac{1}{|\mathcal{E}_{\psi_{\Delta},C^{Lo}}|}\sum_{(\psi_{\Delta}^{one},C^{Lo}_{ur})\in
\mathcal{E}_{\psi_{\Delta},C^{Lo}}}\sum_{\mathbf{d}\in\mathds{O}_{\mathcal{S}^{(1)}}^{\psi^{one}_{\Delta}}}f(\mathbf{d})=\frac{1}{|\mathcal{C}_{Loe}|}\sum_{C^{Lo}_{ur}\in\mathcal{C}_{Loe}}\Bigg(\frac{1}{|\mathcal{C}_{\psi_{\Delta},\mathrm{E}}|}\sum_{\psi_{\Delta}^{one}\in\Psi_{\Delta}^{one}}\sum_{\mathbf{d}\in
\mathds{O}_{\mathcal{S}^{(1)}}^{\psi^{one}_{\Delta}}}f(\mathbf{d})
\Bigg)
\label{thm1pf0}\\
\overset{(a)}{=}&\frac{1}{|\mathcal{C}_{Loe}|}\sum_{C^{Lo}_{ur}\in\mathcal{C}_{Loe}}\Bigg(\frac{1}{((\tau_{K+1})^{n_{K+1}}-1)}\sum_{\mathbf{d}\in(\Lambda_{C_{ur}})^{\diamond}}f(\mathbf{d})\Bigg)\overset{(b)}{\leq}
\frac{1}{((\tau_{K+1})^{n_{K+1}}-1)|\mathcal{C}_{Loe}|}\sum_{C^{Lo}_{ur}\in\mathcal{C}_{Loe}}\sum_{\mathbf{d}\in(\Lambda_{{C}_{ur}})^{\star}}f(\mathbf{d}),
\label{thm1pf2}
\end{align}
where the derivation of each step comes as follows:\\
For \underline{\eqref{thm1pf0}}, we first define
$\mathcal{C}_{\psi_{\Delta},\mathrm{E}}$ as the ensemble of all
mapped nested-codebooks (differential) $C_{\psi_{\Delta}^{one}}$
given a particular super Loeliger linear code $C^{Lo}_{ur}$ (T1.6),
with codewords of $C_{\psi_{\Delta}^{one}}$ satisfying the mapping
rules of the corresponding $\psi_{\Delta}^{one}$. Note that all
$C_{\psi_{\Delta}^{one}} \in \mathcal{C}_{\psi_{\Delta},\mathrm{E}}$
are based on the same $C^{Lo}_{ur}$, but with different mappers.
Also let
$\mathcal{C}_{Loe}=\mathcal{C}_{1,Loe}\times\cdots\times\mathcal{C}_{K+1,Loe}$
be the ensemble of all possible $C^{Lo}_{ur}$ with
$\mathcal{C}_{i,Loe}$ given in Definition \ref{Def_logier}, and
$\Psi_{\Delta}^{one}$ be the ensemble of all possible differential
mappers. Then \eqref{thm1pf0} is obtained by
$|\mathcal{E}_{\psi_{\Delta},C^{Lo}}|=|\mathcal{C}_{\psi_{\Delta},\mathrm{E}}|{|\mathcal{C}_{Loe}|}$
by definition.
\\
For \underline{(\ref{thm1pf2} a)}, given mapper
$\psi^{one}_{\Delta}$ and Loeglier linear code $C_{ur}^{Lo}$ (thus
mapped-codebook $C_{\psi_{\Delta}^{one}}$), we rewrite the
set-$\mathcal{S}^{(1)}$ error cosets as
$\mathds{O}_{\mathcal{S}^{(1)}}^{\psi^{one}_{\Delta}}=\{\mathbf{d}\in
\Lambda_{C_{ur}}:  \bar{\mathbf{d}}\in C_{\psi_{\Delta}^{one}}^{*},
\bar{\mathbf{d}}_{i}\neq \mathbf{0}, \forall i \in \mathcal{S}^{(1)}
\}$, where $\Lambda_{C_{ur}}$ is in (T1.9), and
set-$\mathcal{S}^{(1)}$ errors is defined in Definition
\ref{Def_setS}. Then  the term inside the parentheses on the  LHS of
(\ref{thm1pf2} a) comes from
\begin{align} \label{eq.avgmapper2}
\frac{1}{|\mathcal{C}_{\psi_{\Delta},\mathrm{E}}|}\sum_{\psi_{\Delta}^{one}\in\Psi_{\Delta}^{one}}\sum_{\mathbf{d}\in
\mathds{O}_{\mathcal{S}^{(1)}}^{\psi^{one}_{\Delta}}}f(\mathbf{d})=\frac{1}{|\mathcal{C}_{\psi_{\Delta},\mathrm{E}}|}
\bigg(N_b
\sum_{\mathbf{d}\in(\Lambda_{C_{ur}})^{\diamond}}f(\mathbf{d})\bigg)
\end{align}
where we collect all points belonging to cosets
$\mathds{O}_{\mathcal{S}^{(1)}}^{\psi^{one}_{\Delta}}$ over all
possible mapped codebooks $C_{\psi_{\Delta}^{one}}$ as
$(\Lambda_{C_{ur}})^{\diamond}\triangleq \left\{\mathbf{d}\in
\Lambda_{C_{ur}}: \mathbf{d} \in
\mathds{O}_{\mathcal{S}^{(1)}}^{\psi^{one}_{\Delta}}, {
C_{\psi_{\Delta}^{one}}\in
\mathcal{C}_{\psi_{\Delta},\mathrm{E}}}\right\} $. For
\eqref{eq.avgmapper2}, it comes from the fact that
$\mathcal{C}_{\psi_{\Delta},\mathrm{E}}$ is a balanced set as
follows, where
$(\mathcal{C}_{\mathcal{C}_{\psi_{\Delta},\mathrm{E}}})^{*}$ is the
collection of non-zero codewords in
$\mathcal{C}_{\psi_{\Delta},\mathrm{E}}$, by setting
$(\mathcal{C}_{\mathcal{C}_{\mathrm{E}}})^{*}$ in Definition
\ref{Def_balance} with
$\mathcal{C}_{\mathrm{E}}=\mathcal{C}_{\psi_{\Delta},\mathrm{E}}$
((T1.15) in Table \ref{notations_table} ). Consider two different
vectors $\bc$ and $\bc'$ belonging to
$(\mathcal{C}_{\mathcal{C}_{\psi_{\Delta},\mathrm{E}}})^{*}$. For
each mapped-codebook $C_{\psi_{\Delta}^{one}} \in
\mathcal{C}_{\psi_{\Delta},\mathrm{E}}$ containing $\bc$ but not
$\bc'$, with the corresponding mapper $\psi_{\Delta}^{one}$, we can
easily form another $C_{(\psi_{\Delta}^{one})'} \in
\mathcal{C}_{\psi_{\Delta},\mathrm{E}}$ containing $\bc'$ by forming
a new one-to-one mapper $(\psi_{\Delta}^{one})'$ { from
$\psi_{\Delta}^{one}$}. Therefore, $\bc$ and $\bc'$ are symmetric,
and thus each vector in
$(\mathcal{C}_{\mathcal{C}_{\psi_{\Delta},\mathrm{E}}})^{*}$ is
contained in equal number, denoted by $N_b$, of
$C_{\psi_{\Delta}^{one}}$ from
$\mathcal{C}_{\psi_{\Delta},\mathrm{E}}$. Then
$\mathcal{C}_{\psi_{\Delta},\mathrm{E}}$ is a balanced set as in
Definition \ref{Def_balance}. Together with the fact that
$(\mathcal{C}_{\mathcal{C}_{\psi_{\Delta},\mathrm{E}}})^{*}$ is the
set of coset leaders of $(\Lambda_{C_{ur}})^{\diamond}$, that is,
$(\mathcal{C}_{\mathcal{C}_{\psi_{\Delta},\mathrm{E}}})^{*}=\{\bar{\mathbf{d}}:\mathbf{d}\in(\Lambda_{C_{ur}})^{\diamond}\}$,
then \eqref{eq.avgmapper2} follows.  { Finally,  with
$(\tau_{K+1})^{n_{K+1}}$ being the relay codebook size, since the
differential mapper $\psi_{\Delta}^{one}$ is one-to-one, each
nonzero user codeword can possibly be mapped to
$(\tau_{K+1})^{n_{K+1}}-1$  relay codewords. Also the mapped
nested-codebook ensemble $\mathcal{C}_{\psi_{\Delta},\mathrm{E}}$ is
a balanced set with balanced number $N_b$, we have that
$|\mathcal{C}_{\psi_{\Delta},\mathrm{E}}|/N_b=(\tau_{K+1})^{n_{K+1}}-1$.
Then we obtain (\ref{thm1pf2} a) from (\ref{eq.avgmapper2}).}
\\
For \underline{(\ref{thm1pf2} b)}, we define
$(\Lambda_{C_{ur}})^{\star}$  formed from the super coding-lattice
$\Lambda_{C_{ur}}$ ((T1.9) in Table \ref{notations_table}) as
\begin{equation} \label{eq_CurStar}
(\Lambda_{C_{ur}})^{\star}\triangleq \left\{\mathbf{d}\in
\Lambda_{C_{ur}} : \mathbf{d}_{i}\neq \mathbf{0}, \forall
i\in\mathcal{S}^{(1)}\right\}.
\end{equation}
From the definition of $(\Lambda_{C_{ur}})^{\diamond}$ right after
\eqref{eq.avgmapper2}, we have $(\Lambda_{C_{ur}})^{\diamond}
\subset (\Lambda_{C_{ur}})^{\star}$. Together with the fact that the
indicator function $f(\cdot)$, defined right before
\eqref{Minkowmarc1}, is a nonnegative  function, (\ref{thm1pf2} b)
is obtained.

Finally, the second inequality of (\ref{Minkowmarc2}) can be
obtained from (\ref{Minkowmarc1} b) by following steps similar to
those  in \cite{latticemac} and \cite{latc}. The key observation is
that as $T\rightarrow\infty$, the shaping  lattices
$\Lambda_{S_{i}}$ from Definitions \ref{Def_nested} and
\ref{Def_logier} will be good for minimum square error quantization
\cite{latticegoodfor}, so that their Voronoi regions
$V_{f}(\Lambda_{S_{i}})$   will make the signal behave like an
optimal Gaussian signal. Thus the term $\frac{1}{LT} \log
\int_{\mathbb{R}^{\sum_{i=1}^{K+1}n_{i}}}f(\mathbf{d})d\mathbf{d}/\prod_{i\in
\{\mathcal{S}^{(1)},K+1\}}V_{f}(\Lambda_{S_{i}})$ in
(\ref{Minkowmarc1} b) will approach $-\frac{1}{LT}
R_{unG}^{dst}(\mathbf{H}_{dst}^{^{\{\mathcal{S}^{(1)},K+1\}}})$ in
(\ref{Minkowmarc2}). With $(\tau_{K+1})^{n_{K+1}}=2^{R_{K+1}LT}$ as
defined in Appendix \ref{MHthm1proof}-(I), we then have the second
inequality of (\ref{Minkowmarc2}). The details are given
in Appendix {\ref{appendixderiveof16}}. 

\vspace{-6mm}
\subsection{   Proof of the rate region of the $K$-stage MS-MLC in Theorem \ref{Thm_mod_sum}}
\label{appendix_sketchpfMTH2} \vspace{-2mm}

The proof for the rate region of MS-MLC is similar to the proof of
Theorem \ref{Thm_ach_one}. Here we  show only the principal
difference, which results from the fact that the balanced set
structure exploited in Appendix \ref{MHthm1proof} (to obtain
\eqref{eq.avgmapper2}) is no longer valid for  MS-MLC. We solve this
problem by introducing a new 2-partition balanced set  defined in
 Definition \ref{Def_2_balance} below. Specifically, we will show a
counterpart of \eqref{Minkowmarc1} for the first stage as follows:
For MS-MLC, with the relay-mapper and linear-code ensemble
$\mathcal{E}_{\psi,C^{Lo}}$ of $\{\psi^{mod},C_{ur}^{Lo}\}$ and
$\mathcal{S}^{(1)}=\{1,...,K\}$, we have
\begin{equation}
\begin{split}
&\frac{1}{|\mathcal{E}_{\psi,C^{Lo}}|}\sum_{(\psi^{mod},C_{ur}^{Lo})\in
\mathcal{E}_{\psi,C^{Lo}}}\left|\mathds{O}_{\mathcal{S}^{(1)}}^{\psi^{mod}_{\Delta}}\cap
\mathcal{R}_{\beta}\right| \\
&\leq \frac{(\tau_{K+1})^{n_{K+1}}\prod_{i\in\mathcal{S}^{(1)}
}2^{R_{i}LT}\!\!\!\!}{((\tau_{K+1})^{n_{K+1}}-1)}\left(
\frac{\!\!\int_{\mathbb{R}^{\sum_{i=1}^{K+1}n_{i}}}f(\mathbf{d})d\mathbf{d}}{\prod_{i\in
\{\mathcal{S}^{(1)},K+1\}}V_{f}(\Lambda_{S_{i}})} +
\frac{\int_{\mathbb{R}^{\sum_{i=1}^{K}n_{i}}}f^{\mathcal{S}^{(1)}}(\mathbf{d}_{\mathcal{S}^{(1)}})d\mathbf{d}_{\mathcal{S}^{(1)}}}{(\tau_{K+1})^{n_{K+1}}\prod_{i\in\mathcal{S}^{(1)}
}V_{f}(\Lambda_{S_{i}})}\right),\label{Minkowmarc3}
\end{split}
\end{equation}
which, compared with \eqref{Minkowmarc1}, has an additional term
(second term) in the RHS, (\ref{Minkowmarc3}) where we let
$\mathbf{d}_{\mathcal{S}^{(1)}}=[\mathbf{d}_{i_{1}}^T,...,\mathbf{d}_{i_{|\mathcal{S}^{(1)}|}}^T]^T$,
$i_{1}<\cdots< i_{|\mathcal{S}^{(1)}|}, \forall
i_{j}\in\mathcal{S}^{(1)}$, and the indicator function
$f^{\mathcal{S}^{(1)}}(\mathbf{d}_{\mathcal{S}^{(1)}})=1$ if
$\mathbf{d}_{\mathcal{S}^{(1)}}\in
\mathcal{R}_{\beta}^{{\mathcal{S}^{(1)}}}$, with
$\mathcal{R}_{\beta}^{{\mathcal{S}^{(1)}}}\triangleq
\left\{\mathbf{v}_{{\mathcal{S}^{(1)}}}\in\mathbb{R}^{2|{\mathcal{S}^{(1)}}|M_{u}LT}:\mathbf{v}\in
\mathcal{R}_{\beta},\mathbf{v}_{i}=\mathbf{0},\forall{i}\in
\big\{\{1,...,K+1\}\setminus\mathcal{S}^{(1)}\big\}\right\}$ and the
decision region $\mathcal{R}_{\beta}$ given in Lemma \ref{MHthm1}.
This additional term results in the additional rate constraint
(\ref{ach_modsum2}) compared with Theorem \ref{Thm_ach_one}. Similar
to the derivations of (\ref{Minkowmarc1} a) and \eqref{thm1pf0}, the
LHS of \eqref{Minkowmarc3} equals
\begin{equation}
\frac{1}{|\!\mathcal{E}_{\psi,C^{Lo}}\!|}\!\!\sum_{(\psi^{mod},C_{ur}^{Lo})\in
\mathcal{E}_{\psi,C^{Lo}}}\!\!\!\!\!\!\!\!\!\!\!\!\!\!\left|\mathds{O}_{\mathcal{S}^{(1)}}^{\psi^{mod}_{\Delta}}\cap
\mathcal{R}_{\beta}\right|\!\!=\frac{1}{|\mathcal{C}_{Loe}|}\sum_{C_{ur}^{Lo}\in\mathcal{C}_{Loe}}\left(\frac{1}{|\mathcal{C}_{\psi_{\Delta},\mathrm{E}}|}\sum_{\psi_{\Delta}^{mod}\in\Psi_{\Delta}^{mod}}\sum_{\mathbf{d}\in
\mathds{O}_{\mathcal{S}^{(1)}}^{\psi^{mod}_{\Delta}}}f(\mathbf{d})\right).
\label{eq.bal2}
\end{equation}
Compared with \eqref{thm1pf0}, only the (differential) one-to-one
mapper $\psi_{\Delta}^{one}$ is replaced by $\psi_{\Delta}^{mod}$ in
\eqref{eq.bal2}}. However, unlike  O-MLC in Appendix
\ref{MHthm1proof}, now $\mathcal{C}_{\psi_{\Delta},\mathrm{E}}$ is
not a balanced set, which makes simplifying \eqref{eq.bal2} more
difficult compared with (\ref{thm1pf2} a). To solve this problem, we
need to extend Definition \ref{Def_balance} as follows.
\begin{definition}\emph{(2-partition balanced set):} \label{Def_2_balance}
Following the notation in Definition \ref{Def_balance}, we say that
the set $\mathcal{C}_{\mathrm{E}}$ is $2$-partition balanced if the
non-zero vector set $(\mathcal{C}_{\mathcal{C}_{\mathrm{E}}})^{*}$
can be partitioned as
$(\mathcal{C}_{\mathcal{C}_{\mathrm{E}}})^{*}=\{\mathcal{C}_{\mathcal{C}_{\mathrm{E}},1}^{*},\mathcal{C}_{\mathcal{C}_{\mathrm{E}},2}^{*}\}$,
where every element in
$\mathcal{C}_{\mathcal{C}_{\mathrm{E}},1}^{*}$ is contained in the
same number, denoted by $N_{b,1}$, of $C$ from
$\mathcal{C}_{\mathrm{E}}$ while every element in
$\mathcal{C}_{\mathcal{C}_{\mathrm{E}},2}^{*}$ is also contained in
the same number, denoted by $N_{b,2}$, of $C$ from
$\mathcal{C}_{\mathrm{E}}$.
\end{definition}

For simplifying the RHS of \eqref{eq.bal2}, now we  explore the
properties of the ensemble $\mathcal{C}_{\psi_{\Delta},\mathrm{E}}$
using the 2-partition balanced set in Definition
\ref{Def_2_balance}. Recall that
$\mathcal{C}_{\psi_{\Delta},\mathrm{E}}$  is the ensemble of all
mapper-codebooks (differential) $C_{\psi_{\Delta}^{mod}}$ with
mapper $\psi_{\Delta}^{mod}\in \Psi_{\Delta}^{mod}$, where the
differential super-codewords in $C_{\psi_{\Delta}^{mod}}$ satisfy
the mapping rules of $\psi_{\Delta}^{mod}$. For any user set
$S\subseteq\{1,...,K\}$, let
${\mathcal{C}_{\psi_{\Delta},\mathrm{E}}^{S}}$ be the set of
mapper-codebooks formed by collecting every codebook belonging to
$\mathcal{C}_{\psi_{\Delta},\mathrm{E}}$, but excluding  codewords
$\bar{\mathbf{d}}\notin\mathcal{D}_{S}$  where
$\mathcal{D}_{S}\triangleq\{\bar{\mathbf{d}}:\bar{\mathbf{d}}_{i}\neq\mathbf{0},\forall
i\in S\}$. The fact that for every user set $S$,
${\mathcal{C}_{\psi_{\Delta},\mathrm{E}}^{S}}$ is a 2-partition
balanced set in Definition \ref{Def_2_balance} follows from the
following observations. According to whether the differential
codewords of the relay $\bar{\mathbf{d}}_{r}=\mathbf{0}$ or not, we
can categorize them into two partitions. The differential codewords
in each partition are symmetric according to the proof in Appendix
\ref{MHthm1proof}. Note that $\bar{\mathbf{d}}_{r}=\mathbf{0}$
occurs only in the MS-MLC due to the modulo-sum operation in
Definition \ref{Def_Modulo}. In O-MLC, the one-to-one mapper
guarantees $\bar{\mathbf{d}}_{r}\neq\mathbf{0}$, and results in
simpler \eqref{Minkowmarc1} compared with our target
\eqref{Minkowmarc3}. Now for the first stage, we set
$S=\mathcal{S}^{(1)}$} and the two partitions of
$\mathcal{C}_{\psi_{\Delta},\mathrm{E}}^{\mathcal{S}^{(1)}}$ can be
formed as follows. Let
$\mathcal{C}_{\mathcal{C}_{\psi_{\Delta},\mathrm{E}}^{{\mathcal{S}^{(1)}}},1}^{*}$
and
$\mathcal{C}_{\mathcal{C}_{\psi_{\Delta},\mathrm{E}}^{{\mathcal{S}^{(1)}}},2}^{*}$
be the codeword partitions corresponding to
$\mathcal{C}_{\mathcal{C}_{\mathrm{E}},1}^{*}$ and
$\mathcal{C}_{\mathcal{C}_{\mathrm{E}},2}^{*}$ in Definition
\ref{Def_2_balance} with
$\mathcal{C}_{\mathrm{E}}={\mathcal{C}_{\psi_{\Delta},\mathrm{E}}^{\mathcal{S}^{(1)}}}$
respectively, where
$\mathcal{C}_{\mathcal{C}_{\psi_{\Delta},\mathrm{E}}^{\mathcal{S}^{(1)}},1}^{*}=\{\bar{\mathbf{d}}\in
C_{\psi_{\Delta}^{mod}}^{*}:\bar{\mathbf{d}}_{r}\neq\mathbf{0},\bar{\mathbf{d}}\in\mathcal{D}_{\mathcal{S}^{(1)}}
,\psi_{\Delta}^{mod}\in\Psi_{\Delta}^{mod}\}$, where the codewords
of the relay are distinguishable  since
$\bar{\mathbf{d}}_{r}\neq\mathbf{0}$;
$\mathcal{C}_{\mathcal{C}_{\psi_{\Delta},\mathrm{E}}^{\mathcal{S}^{(1)}},2}^{*}$
is defined similarly but with $\bar{\mathbf{d}}_{r}\neq \mathbf{0}$
replaced by $\bar{\mathbf{d}_{r}}=\mathbf{0}$. Also let the
corresponding balanced numbers of
$\mathcal{C}_{\mathcal{C}_{\mathrm{E}},1}^{*}$ and
$\mathcal{C}_{\mathcal{C}_{\mathrm{E}},2}^{*}$ be $N_{b,1}$ and
$N_{b,2}$ respectively. Now we can simplify the RHS of
\eqref{eq.bal2} using the aforementioned 2-partition balanced set
property and following the proof of the O-MLC counterpart
(\ref{eq.avgmapper2}), while in (\ref{eq.avgmapper2})
$\mathcal{C}_{\psi_{\Delta},\mathrm{E}}$ is a balanced set.
Corresponding to (\ref{eq.avgmapper2}), the term inside the
parentheses on the RHS of \eqref{eq.bal2} now equals
\begin{align}
\sum_{\psi_{\Delta}^{mod}\in\Psi_{\Delta}^{mod}}\sum_{\substack{\mathbf{d}\in
\mathds{O}_{\Delta,\mathcal{S}^{(1)}}^{\psi^{mod}}}}\!\!\!\!\frac{f(\mathbf{d})}{|\mathcal{C}_{\psi_{\Delta},\mathrm{E}}|}
\!\!=\!\!\frac{N_{b,1}}{|\mathcal{C}_{\psi_{\Delta},\mathrm{E}}|}\sum_{{\mathbf{d}}\in
(\Lambda_{C_{ur},1})^{\diamond}
}f(\mathbf{d})+\frac{N_{b,2}}{|\mathcal{C}_{\psi_{\Delta},\mathrm{E}}|}\sum_{{\mathbf{d}}\in
(\Lambda_{C_{ur},2})^{\diamond} }f(\mathbf{d}) \label{basicavg4}
\end{align}
where $(\Lambda_{C_{ur},1})^{\diamond}$ and
$(\Lambda_{C_{ur},2})^{\diamond}$ are the lattice codeword sets
for the 2-partitions corresponding to
$(\Lambda_{C_{ur}})^{\diamond}$ in (\ref{eq.avgmapper2}),
respectively.

Unfortunately, the balanced numbers in \eqref{basicavg4} cannot be
easily computed as in the proof of (\ref{thm1pf2} a) and vary with
${\mathcal{C}_{\psi_{\Delta},\mathrm{E}}^{S}}$ for different sets
$S$. Thus we alternatively show two upper-bounds as
\begin{equation} \label{eq_balance_map}
\frac{N_{b,1}}{|\mathcal{C}_{\psi_{\Delta},\mathrm{E}}|} \leq
\frac{1}{(\tau_{K+1})^{n_{K+1}}-1}, \; \mbox{and} \;\;
\frac{N_{b,2}}{|\mathcal{C}_{\psi_{\Delta},\mathrm{E}}|} \leq
\frac{1}{(\tau_{K+1})^{n_{K+1}}-1},
\end{equation}
where $(\tau_{K+1})^{n_{K+1}}$ is the relay codebook size from
Definition \ref{Def_nested}. Then following similar arguments as
those used in proving (\ref{thm1pf2} a), (\ref{thm1pf2} b) and
(\ref{Minkowmarc1} b) (steps after \eqref{eq.avgmapper2}), we can
prove (\ref{Minkowmarc3}) from \eqref{basicavg4}  and
\eqref{eq_balance_map} with the details omitted. For proving
\eqref{eq_balance_map}, we start with the single user case where
$|\mathcal{S}^{(1)}|=1$ ($\mathcal{S}^{(1)}=\{1,\ldots,K\}=\{1\}$
when  the number of users $K=1$), and then extend to the case
$|\mathcal{S}^{(1)}|=2$ as the upcoming \eqref{2recursive2} and
\eqref{2recursive1}. By repeating this procedure recursively, we
obtain the formulation of balanced numbers in \eqref{eq_balance_map}
as \eqref{recursive4} in the next paragraph and then derive the
upper-bound. When $|\mathcal{S}^{(1)}|=1$,
${\mathcal{C}_{\psi_{\Delta},\mathrm{E}}}$ is a balanced set, and
thus balanced numbers (normalized) are given by
 $\big(N_{b,1}/|\mathcal{C}_{\psi_{\Delta},\mathrm{E}}|\big)_{|\mathcal{S}^{(1)}|=1}=\frac{1}{((\tau_{K+1})^{n_{K+1}}-1)}$
from the  proof of (\ref{thm1pf2} a), and
$\big(N_{b,2}/|\mathcal{C}_{\psi_{\Delta},\mathrm{E}}|\big)_{{|\mathcal{S}^{(1)}|=1}}=0$
by definition. Here the subscript $|\mathcal{S}^{(1)}|=1$ is added
to the notation of the normalized balanced numbers to represent the
upcoming \eqref{2recursive2} and \eqref{2recursive1}. For
$|\mathcal{S}^{(1)}|=2$, the corresponding balanced number for the
partition with $\bar{\mathbf{d}}_{r}=0$ is
\begin{equation} \label{2recursive2}
\left(\frac{N_{b,2}}{|\mathcal{C}_{\psi_{\Delta},\mathrm{E}}|}\right)_{|\mathcal{S}^{(1)}|=2}
=\frac{1}{((\tau_{3})^{n_{3}}-1)}\left(((\tau_{3})^{n_{3}}-1)\left(\frac{N_{b,1}}{|\mathcal{C}_{\psi_{\Delta},\mathrm{E}}|}\right)_{{|\mathcal{S}^{(1)}|=1}}\right).
\end{equation}
To show \eqref{2recursive2}, we  count the occurrence of a
particular super-codeword (differential) $\bar{\mathbf{d}}_{1}
\times \bar{\mathbf{d}}_{2} \times \mathbf{0}$
($\bar{\mathbf{d}}_{r}=\mathbf{0}$) in the overall two-user
mapped-codebook ensemble $\mathcal{C}_{\psi_{\Delta},\mathrm{E}}$,
where user $i$'s codeword (coset leader) is $\bar{\mathbf{d}}_{i},
\;i=1,2$. Let $\psi_{\Delta,i}^{mod}(\bar{\mathbf{d}}_{i})$ be the
(differential) mapper corresponding to user $i$ as in Definition
\ref{Def_Modulo}. First, we compute
$\frac{\big(N_{b,2}\big)_{|\mathcal{S}^{(1)}|=2}}{\big(N_{b,1}\big)_{|\mathcal{S}^{(1)}|=1}}$.
From Definition \ref{Def_balance}, given a particular
$\bar{\mathbf{d}}_{1} \times \bar{\mathbf{d}}_{r,{1}}$ with
$\bar{\mathbf{d}}_{1} \neq \mathbf{0}$ such that
$\bar{\mathbf{d}}_{r,{1}}=\psi_{\Delta,1}^{mod}(\bar{\mathbf{d}}_{1})$,
there will be $\big(N_{b,1}\big)_{|\mathcal{S}^{(1)}|=1}$ possible
mappers $\psi_{\Delta,1}^{mod}$. Now from Definition
\ref{Def_Modulo}, for this partition to have
$\bar{\mathbf{d}}_{r}=\sum_{i=1}^2
\psi_{\Delta,i}^{mod}(\bar{\mathbf{d}}_{i}) =\mathbf{0}$, the
mappers corresponding to user $2$ must satisfy $
(\bar{\mathbf{d}}_{r,{1}}+\psi_{\Delta,2}^{mod}(\bar{\mathbf{d}}_{2}))
\mod \Lambda_{S_{r}}=\mathbf{0} $ since
$\bar{\mathbf{d}}_{r,{1}}=\psi_{\Delta,1}^{mod}(\bar{\mathbf{d}}_{1})$.
Thus for a fixed $\bar{\mathbf{d}}_{r,{1}}$, the vector
$\psi_{\Delta,2}^{mod}(\bar{\mathbf{d}}_{2})$ for the given
$\bar{\mathbf{d}}_{2}$ is also fixed from the definition of the
codomain $C_r^{nest} $ of $\psi_{\Delta,2}^{mod}(\cdot)$ given in
Definition \ref{Def_OneOne}. Also from Definition \ref{Def_setS},
$\bar{\mathbf{d}}_{2} \neq 0$ since the user messages (encoded in
cosets) are with set-$\mathcal{S}^{(1)}$ errors, where
$\mathcal{S}^{(1)}=\{1,2\}$. Then for a fixed
$\bar{\mathbf{d}}_{r,{1}}$, excluding the given
$\bar{\mathbf{d}}_{2}$ and the zero vector, by assigning the mapping
rules for the remaining  $(\tau_{2})^{n_{2}}-2$ points in the domain
of $\psi_{\Delta,2}^{mod}(\cdot)$, there are
$\prod_{i=2}^{(\tau_{2})^{n_{2}}-1}((\tau_{3})^{n_{3}}-i)$ possible
injective mappers $\psi_{\Delta,2}^{mod}$ where $(\tau_{i})^{n_{i}}$
is transmitter $i$'s (users and relay) codebook size. Note that to
make
$(\psi_{\Delta,2}^{mod}(\bar{\mathbf{d}}_{2})+\bar{\mathbf{d}}_{r,{1}})
\mod \Lambda_{S_{r}}=\mathbf{0}$, it is required that
$\bar{\mathbf{d}}_{r,{1}}\neq \mathbf{0}$ since
$\bar{\mathbf{d}}_{2} \neq 0$. As there are a total of
$((\tau_{3})^{n_{3}}-1)$ possible $\bar{\mathbf{d}}_{r,{1}}\neq
\mathbf{0}$ in the relay's (differential) codebook, we have
$\frac{\big(N_{b,2}\big)_{|\mathcal{S}^{(1)}|=2}}{\big(N_{b,1}\big)_{|\mathcal{S}^{(1)}|=1}}=((\tau_{3})^{n_{3}}-1)\prod_{i=2}^{(\tau_{2})^{n_{2}}-1}((\tau_{3})^{n_{3}}-i)$.
Also
$\frac{|\mathcal{C}_{\psi_{\Delta},\mathrm{E}}|_{{|\mathcal{S}^{(1)}|=2}}}{|\mathcal{C}_{\psi_{\Delta},\mathrm{E}}|_{{|\mathcal{S}^{(1)}|=1}}}=((\tau_{3})^{n_{3}}-1)\prod_{i=2}^{(\tau_{2})^{n_{2}}-1}((\tau_{3})^{n_{3}}-i)$
by counting all possible injective mappers $\psi_{\Delta,2}^{mod}$
of user $2$. Thus \eqref{2recursive2} is valid. Similar to
\eqref{2recursive2}, for $|\mathcal{S}^{(1)}|=2$, the corresponding
balanced number for the partition with $\bar{\mathbf{d}}_{r} \neq 0$
is
\begin{align}
\left(\frac{N_{b,1}}{|\mathcal{C}_{\psi_{\Delta},\mathrm{E}}|}\right)_{{|\mathcal{S}^{(1)}|=2}}&=\frac{1}{((\tau_{3})^{n_{3}}-1)}\left(((\tau_{3})^{n_{3}}-2)\left(\frac{N_{b,1}}{|\mathcal{C}_{\psi_{\Delta},\mathrm{E}}|}\right)_{{|\mathcal{S}^{(1)}|=1}}+1\cdot\left(\frac{N_{b,2}}{|\mathcal{C}_{\psi_{\Delta},\mathrm{E}}|}\right)_{{|\mathcal{S}^{(1)}|=1}}\right).
\label{2recursive1}
\end{align}
The proof of \eqref{2recursive1} is similar to that for
\eqref{2recursive2}, but now
$(\psi_{\Delta,2}^{mod}(\bar{\mathbf{d}}_{2})+\bar{\mathbf{d}}_{r,{1}})
\mod \Lambda_{S_{r}} \neq \mathbf{0}$. The first term in the
parenthesis on the RHS of \eqref{2recursive1} corresponds to the
case $\bar{\mathbf{d}}_{r,{1}} \neq 0$ while the second term
corresponds to the case $\bar{\mathbf{d}}_{r,{1}} = 0$. The details
are omitted.

Finally, by repeating the arguments in the previous paragraph we can
find the balanced numbers for $|\mathcal{S}^{(1)}|=3$ with
\eqref{2recursive2} and \eqref{2recursive1}, and so on. Then for the
balanced numbers when $|\mathcal{S}^{(1)}|=K$ , we have
\begin{align}
\frac{N_{b,1}}{|\mathcal{C}_{\psi_{\Delta},\mathrm{E}}|}&=\frac{1}{(\tau_{K+1})^{n_{K+1}}((\tau_{K+1})^{n_{K+1}}-1)^{|\mathcal{S}^{(1)}|}}\left(((\tau_{K+1})^{n_{K+1}}-1)^{|\mathcal{S}^{(1)}|}+(-1)^{|\mathcal{S}^{(1)}|+1}\right) \notag \\
\frac{N_{b,2}}{|\mathcal{C}_{\psi_{\Delta},\mathrm{E}}|}&=\frac{1}{(\tau_{K+1})^{n_{K+1}}((\tau_{K+1})^{n_{K+1}}-1)^{|\mathcal{S}^{(1)}|}}\left(((\tau_{K+1})^{n_{K+1}}-1)^{|\mathcal{S}^{(1)}|}+((\tau_{K+1})^{n_{K+1}}-1)(-1)^{|\mathcal{S}^{(1)}|}\right)
\label{recursive4},
\end{align}
where $(\tau_{K+1})^{n_{K+1}}$ is the relay codebook size. On noting
that $\frac{1}{((\tau_{K+1})^{n_{K+1}}-1)^{|\mathcal{S}^{(1)}|}}\leq
\frac{1}{((\tau_{K+1})^{n_{K+1}}-1)}$ for
$|\mathcal{S}^{(1)}|\geq1$, together with (\ref{recursive4}), one
can show that \eqref{eq_balance_map} is valid. Then our proof for
\eqref{Minkowmarc3} is complete.

\subsection{Proof  of   (\ref{Minkowmarc1} b) and \eqref{Minkowmarc2} } \label{appendixderiveof16} We can rewrite \eqref{thm1pf2} as
\begin{align}
&\frac{1}{((\tau_{K+1})^{n_{K+1}}-1)|\mathcal{C}_{Loe}|}\sum_{C^{Lo}_{ur}\in\mathcal{C}_{Loe}}\sum_{\mathbf{d}\in(\Lambda_{{C}_{ur}})^{\star}}f(\mathbf{d})
\\
=&\frac{1}{((\tau_{K+1})^{n_{K+1}}-1)}\sum_{\mathbf{z}\in\left(\mathbb{Z}^{n}\right)^{\star}
:
\bar{\mathbf{z}}_{\underline{p}}=\mathbf{0}}f(\underline{\gamma}\mathbf{z})+\frac{1}{|\mathcal{C}_{Loe}|}\sum_{C^{Lo}_{ur}\in\mathcal{C}_{Loe}}\sum_{\mathbf{a}\in{(C^{Lo}_{ur})^{*}}}\left[\sum_{\mathbf{z}\in\left(\mathbb{Z}^{n}\right)^{\star}
:
\bar{\mathbf{z}}_{\underline{p}}=\mathbf{a}}f(\underline{\gamma}\mathbf{z})\right],
\label{thm1pf3}
\end{align}
 where   $\underline{\gamma}\mathbf{z}$ is defined in
(T1.23) in Table \ref{notations_table}. In (\ref{thm1pf3}),  we
define $(\mathbb{Z}^{n})^{\star}\triangleq\left\{\mathbf{z}\in
\mathbb{Z}^{n}:\mathbf{z}_{i}\neq\mathbf{0},  \forall
i\in\{1,...,K+1\}\right\} $ and $\bar{\mathbf{z}}_{\underline{p}}$
is formed by applying modulo $p_{i}$ operation  on  elements of
$\mathbf{z}_{i}$ ((T1.22) in Table \ref{notations_table}). Now for
summation in the second term of (\ref{thm1pf3}), we separate the
summation over $\mathbf{a}\in ({C^{Lo}_{ur}})^{*}$ by the cases $\{
\mathbf{a}_{u}\neq\mathbf{0},\mathbf{a}_{r}=\mathbf{0}\}$, $\{
 \mathbf{a}_{r}\neq\mathbf{0},\mathbf{a}_{u}=\mathbf{0}\}$ and $\{
\mathbf{a}_{r}\neq\mathbf{0}, \mathbf{a}_{u}\neq\mathbf{0} \}$. By
averaging over $\mathcal{C}_{Loe}$ for these three cases, we have
(\ref{thm1pf5}), (\ref{thm1pf6}) and (\ref{thm1pf7}), respectively,
\allowdisplaybreaks
\begin{align}
&\frac{1}{|\mathcal{C}_{Loe}|}\sum_{C^{Lo}_{ur}\in\mathcal{C}_{Loe}}\sum_{\mathbf{a}\in{(C^{Lo}_{ur})^{*}}}\left[\sum_{\mathbf{z}\in\left(\mathbb{Z}^{n}\right)^{\star}
:
\bar{\mathbf{z}}_{\underline{p}}=\mathbf{a}}f(\underline{\gamma}\mathbf{z})\right] \notag\\
=&\sum_{S\subseteq\{1,...,K\},S\neq\phi}\left[\frac{\prod_{i\in
S}(p_{i}^{k_{i}}-1)}{\prod_{i\in
S}(p_{i}^{n_{i}}-1)}\cdot\sum_{\mathbf{z}\in
\left(\mathbb{Z}^{n}\right)^{\star}:
(\bar{\mathbf{z}}_{\underline{p}})_{i}\neq \mathbf{0},  i\in
{S},(\bar{\mathbf{z}}_{\underline{p}})_{i'}=\mathbf{0},i' \in
{\{S^{c},K+1\}}}f(\underline{\gamma}\mathbf{z})\right]
\label{thm1pf5}
\\
+&\left[\frac{(p_{i}^{k_{K+1}}-1)}{(p_{i}^{n_{K+1}}-1)}\cdot\sum_{\mathbf{z}\in
\left(\mathbb{Z}^{n}\right)^{\star}:
(\bar{\mathbf{z}}_{\underline{p}})_{K+1}\neq \mathbf{0},
(\bar{\mathbf{z}}_{\underline{p}})_{i'}=\mathbf{0},i' \in
\mathcal{S}^{(1)}}f(\underline{\gamma}\mathbf{z})\right]
\label{thm1pf6}
\\
+&\sum_{S\subseteq\{1,...,K\},S\neq\phi}\left[\frac{\prod_{i\in
\{S,K+1\}}(p_{i}^{k_{i}}-1)}{\prod_{i\in
\{S,K+1\}}(p_{i}^{n_{i}}-1)}\cdot\sum_{\mathbf{z}\in
\left(\mathbb{Z}^{n}\right)^{\star}:
(\bar{\mathbf{z}}_{\underline{p}})_{i}\neq \mathbf{0},  i\in
{\{S,K+1\}},(\bar{\mathbf{z}}_{\underline{p}})_{i'}=\mathbf{0},i'
\in {S^{c}}}f(\underline{\gamma}\mathbf{z})\right] \label{thm1pf7}
\\
\rightarrow & \frac{1}{((\tau_{K+1})^{n_{K+1}}-1)\prod_{i\in
\{\mathcal{S}^{(1)},K+1\}}V_{f}(\Lambda_{C_{i}})}\int_{\mathbb{R}^{n}}f(\mathbf{d})d\mathbf{d}
\label{thm1pf8}
\end{align}
as $p_{i}\rightarrow\infty$, $\gamma_{i}\rightarrow 0$ (Definition
\ref{Def_logier}). Since $f$ has a bounded support ($f$ vanishes at
infinity), with the definition of
$\left(\mathbb{Z}^{n}\right)^{\star}$, the first term in
(\ref{thm1pf3}) vanishes for sufficiently large
$\gamma_{i}p_{i}\rightarrow\infty$ as shown in
\cite{latticemac},\cite{avgboundlattice}.  The terms in
(\ref{thm1pf5}) also vanish  by noting that at least one of elements
of $\mathbf{z}_{K+1}$  is equal to the multiples of $p_{K+1}$, which
results in $f(\underline{\gamma}\mathbf{z})\rightarrow 0$ in
(\ref{thm1pf5}). The term in (\ref{thm1pf6}) follows similarly.
Finally, the term in (\ref{thm1pf7}) approaches to (\ref{thm1pf8})
for $S=\mathcal{S}^{(1)}$, and vanish otherwise in a way similar to
(\ref{thm1pf5}), (\ref{thm1pf6}), as $\gamma_{i}\rightarrow0$ with
$p_{i}^{n_{i}-k_{i}}\gamma_{i}^{n_{i}}=V_{f}(\Lambda_{C_{i}})$ fixed
as in those \cite{latticemac}. From Definition \ref{Def_nested},
$V_{f}(\Lambda_{S_{i}})/V_{f}(\Lambda_{C_{i}})=2^{R_iLT}=(\tau_{i})^{n_{i}}$,
then (\ref{Minkowmarc1} b) can be obtained from \eqref{thm1pf8}.
Finally, (\ref{Minkowmarc2}) can be obtained from (\ref{Minkowmarc1}
b) by following the footsteps in \cite{latc}.

\subsection{Proof of
(\ref{eq.DMTanalsis1})} \label{appendixthmdmtone}
\begin{proof}
 For the $K$ users, we use the
self-similar nested lattice (Definition \ref{Def_nested}) where
$\Lambda_{S_{i}}=\tau_{i}\Lambda_{C_{i}}$, $\tau_{i}=\lfloor
\rho_{d}^{\frac{r_{i}}{2M_{u}}}\rfloor$ in order to satisfy the
transmission rate constraint $R_{i}(\rho_{d})\doteq
r_{i}\log\rho_{d}$. The ensemble $\mathcal{E}_{\psi,C^{Lo}}$ defined
in the proof of Theorem \ref{Thm_ach_one} with $k_{i}=1$ (Definition
\ref{Def_logier}) is considered, on which the corresponding lattices
ensemble  is then expurgated in a way similar to that in the proof
of Theorem 6 in \cite{latc}.
 We denote the expurgated
 ensemble of codebooks, $C_{code}$ (i.e.,  $C_{\psi^{one}}$ given the corresponding lattices in the expurgated lattices ensemble), as $\mathcal{C}_{code}^{exp}$. Then the
average error  probability, $P_{e}(\rho_{d})$ in (T1.20) of Table
\ref{notations_table}, can be upper bounded by
\begin{align}
P_{e}(\rho_{d})  \triangleq
E_{\mathcal{C}_{code}^{exp},\mathbf{H}_{dst}}\left[Pr(Er|C_{code},\mathbf{H}_{dst})\right]
 \leq
Pr(\mathcal{O})+E_{\mathcal{C}_{code}^{exp},\mathbf{H}_{dst}}\left[Pr(Er,\mathcal{O}^{c}|C_{code},\mathbf{H}_{dst})\right]
\label{eq.1}
\end{align}
where $Pr(Er|C_{code},\mathbf{H}_{dst})$ is the probability of the
event that given a $\{C_{code},\mathbf{H}_{dst}\}$, not all users
are correctly decoded at the destination and $\mathcal{O}$ denotes
for the outage event set of $\mathbf{H}_{dst}$ ($\mathbf{H}_{dst}$
 does not satisfy (\ref{ach_one})).
For the second term on the RHS of the inequality in (\ref{eq.1}), by
averaging the
 term $Pr(Er,\mathcal{O}^{c}|{C}_{code},\mathbf{H}_{dst})$ over
 $C_{code}\in\mathcal{C}_{code}^{exp}$ and then over $\mathbf{H}_{dst}\in
 \mathcal{O}^{c}$,
 we will show
\begin{equation}
E_{\mathbf{H}_{dst}}\left[Pr(Er,\mathcal{O}^{c}|\mathbf{H}_{dst})\right]
\dot{=}Pr(\mathcal{O}) 
\end{equation}
Following the  steps similar to those in \cite{latticemac} and
\cite{latc}, considering a tuple of multiplexing gains, $r_{i}$, to
meet a diversity requirement $d$ for each user as in \cite{DMT_MAC},
 given a $\mathbf{H}_{dst}$,
 we have,
\begin{equation}
\begin{split}
Pr(Er, \mathcal{O}^{c}|\mathbf{H}_{dst})\dot{\leq}
&(1+\delta)\frac{\tau_{K+1}^{n_{K+1}}}{\tau_{K+1}^{n_{K+1}}-1}\left(
\sum_{S\subseteq\{1,...,K\},S\neq\phi}\rho_{d}^{LT\sum_{i\in
S}r_{i}}\bigg(\frac{KM_{u}+M_{r}}{|S|M_{u}+M_{r}}\bigg)^{(|S|M_{u}+M_{r})LT}\right.\\\
&
\;\;\;\;\;\;\;\;\;\;\;\;\;\;\;\;\;\;\;\;\;\;\;\;\;\;\;\;\;\;\;\;\;\;\;\;\;\;\;\;\;\;\;\;\;\;\;\;\;\;\;\;\;\left.\cdot\det\left(\mathbf{I}_{2(|S|M_{u}+M_{r})LT}+\left(\mathbf{H}_{dst}^{\{S,K+1\}}\right)^{H}\mathbf{H}_{dst}^{\{S,K+1\}}\right)^{-1}\right),\label{eq.3}
\end{split}
\end{equation}
where $p_{i}\rightarrow\infty, \forall i$ and  $\delta>0$.

 Let
$Pr(\mathcal{O})\doteq\rho_{d}^{-d(\mathbf{r})}$.  Although $Pr(O)$
does not necessarily  guarantee the minimum outage probability, it
suffices for the DMT analysis as indicated in \cite{diva}. The
explicit formulation of
 $d(\mathbf{r})$
 is
generally difficult to obtain since the joint probability density
function (pdf) of eigenvalues of
$(\mathbf{H}_{dst}^{\{S,K+1\}})^{H}\mathbf{H}_{dst}^{\{S,K+1\}}$ is
generally not easy to evaluate. However, from Theorem 3.2.17 of
\cite{dmtbook}, it can be seen that the joint pdf of these
eigenvalues is a continuous function. Therefore, by choosing  a
sufficiently large, but finite $T$, such that  the term on the RHS
in (\ref{eq.3}) decays  fast enough, we can prove that
$E_{\mathbf{H}_{dst}}[Pr(Er,\mathcal{O}^{c}|\mathbf{H}_{dst})]$ is
exponentially equal to $Pr(O)$ using the techniques similar to those
in
 \cite{latticemac}, \cite{latc},\cite{DMT_MAC} and \cite{diva}.  Together with
(\ref{eq.1}), we obtain (\ref{eq.DMTanalsis1}). Note the rate loss
terms in (\ref{eq.3}) are exponentially negligible  (independent of
$\rho_{d}$) in the DMT analysis.
\end{proof}

\vspace{-5mm}

\renewcommand{\baselinestretch}{1}

\bibliographystyle{IEEEtran}
\bibliography{IEEEabrv,reference}

\begin{thebibliography}{10}
\providecommand{\url}[1]{#1}
\csname url@samestyle\endcsname
\providecommand{\newblock}{\relax}
\providecommand{\bibinfo}[2]{#2}
\providecommand{\BIBentrySTDinterwordspacing}{\spaceskip=0pt\relax}
\providecommand{\BIBentryALTinterwordstretchfactor}{4}
\providecommand{\BIBentryALTinterwordspacing}{\spaceskip=\fontdimen2\font plus
\BIBentryALTinterwordstretchfactor\fontdimen3\font minus
  \fontdimen4\font\relax}
\providecommand{\BIBforeignlanguage}[2]{{%
\expandafter\ifx\csname l@#1\endcsname\relax
\typeout{** WARNING: IEEEtran.bst: No hyphenation pattern has been}%
\typeout{** loaded for the language `#1'. Using the pattern for}%
\typeout{** the default language instead.}%
\else
\language=\csname l@#1\endcsname
\fi
#2}}
\providecommand{\BIBdecl}{\relax}
\BIBdecl

\bibitem{funo}
D.~Tse and P.~Viswanath, \emph{Fundamentals of Wireless Communication}.\hskip
  1em plus 0.5em minus 0.4em\relax Cambridge University Press, 2005.

\bibitem{cooperaive_outage}
J.~N. Laneman, D.~N.~C. Tse, and G.~W. Wornell, ``Cooperative diversity in
  wireless networks: Efficient protocols and outage behavior,'' \emph{{IEEE}
  Trans. Inform. Theory}, vol.~50, no.~12, pp. 3062--3080, Dec. 2004.

\bibitem{on_the_achievableDMT}
K.~Azarian and H.~El~Gamal, ``On the achievable diversity-multiplexing tradeoff
  in half-duplex cooperative channels,'' \emph{{IEEE} Trans. Inform. Theory},
  vol.~51, no.~12, pp. 4152--4172, Dec. 2005.

\bibitem{MARCfirst}
G.~Kramer and A.~J. van Wijngaarden, ``On the white gaussian multiple-access
  relay channel,'' in \emph{Proc. IEEE Int. Symp. Inform. Theory}, Sorrento,
  Italy, June 2000, p.~40.

\bibitem{case_MAC}
D.~Chen, K.~Azarian, and J.~N. Laneman, ``A case for amplify-forward relaying
  in the block-fading multiaccess channel,'' \emph{{IEEE} Trans. Inform.
  Theory}, vol.~54, no.~8, pp. 3728--3733, Aug. 2008.

\bibitem{Capcity_MARC2}
L.~Sankaranarayanan, G.~Kramer, and N.~B. Mandayam, ``Hierarchical sensor
  networks: Capacity theorems and cooperative strategies using the
  multiple-access relay channelmodel,'' in \emph{Proc. First IEEE Conference on
  Sensor and Ad Hoc Communications and Networks}, Santa Clara, CA, Oct. 2004.

\bibitem{cooperaive_capacity}
G.~Kramer, M.~Gastpar, and P.~Gupta, ``Cooperative strategies and capacity
  theorems for relay networks,'' \emph{{IEEE} Trans. Inform. Theory}, vol.~51,
  no.~9, pp. 3037--3063, Sep. 2005.

\bibitem{optimality_ARQ}
K.~Azarian, H.~El~Gamal, and P.~Schniter, ``On the optimality of
  \textsc{arq-ddf} protocols,'' \emph{{IEEE} Trans. Inform. Theory}, vol.~54,
  no.~4, pp. 1718--1724, Apr. 2008.

\bibitem{cooperative_wirelss}
M.~Yuksel and E.~Erkip, ``Multiple-antenna cooperative wireless systems: A
  diversity-multiplexing tradeoff perspective,'' \emph{{IEEE} Trans. Inform.
  Theory}, vol.~53, no.~10, pp. 3371--3393, Oct. 2007.

\bibitem{latticemac}
Y.~Nam and H.~El~Gamal, ``On the optimality of lattice coding and decoding in
  multiple access channels,'' in \emph{Proc. IEEE Int. Symp. Inform.Theory},
  Nice, France, 2007.

\bibitem{aunified}
A.~Murugan, H.~El~Gamal, M.~Damen, and G.~Caire, ``A unified framework for tree
  search: Rediscovering the sequential decoder,'' \emph{{IEEE} Trans. Inform.
  Theory}, vol.~52, no.~3, pp. 933--953, Mar. 2006.

\bibitem{onmax}
M.~O. Damen, H.~El~Gamal, and G.~Caire, ``On maximum-likelihood detection and
  the search for the closest lattice point,'' \emph{{IEEE} Trans. Inform.
  Theory}, vol.~49, no.~10, pp. 2389--2402, Oct. 2003.

\bibitem{MARCnetowrk2}
C.~Hausl and P.~Dupraz, ``Joint network-channel coding for the multiple-access
  relay channels,'' in \emph{Proc. Int. Workshop on Wireless Ad-hoc and Sensor
  Networks (IWWAN)}, New York, June 2006.

\bibitem{complexnetwork}
T.~Wang and G.~B. Giannakis, ``Complex field network coding for multiuser
  cooperative communications,'' \emph{{IEEE} J. Select. Areas Commun.},
  vol.~26, no.~8, pp. 561--571, Apr. 2008.

\bibitem{nested_lattice}
R.~Zamir, S.~Shamai, and U.~Erez, ``Nested linear/lattice codes for structured
  multiterminal binning,'' \emph{{IEEE} Trans. Inform. Theory}, vol.~48, no.~6,
  pp. 1250--1276, June 2002.

\bibitem{latc}
H.~El~Gamal, G.~Caire, and M.~Damen, ``Lattice coding and decoding achieve the
  optimal diversity-multilpexing tradeoff of {MIMO} channels,'' \emph{{IEEE}
  Trans. Inform. Theory}, vol.~50, no.~6, pp. 968--985, June 2004.

\bibitem{avgboundlattice}
H.-A. Loeliger, ``Averaging bounds for lattices and linear codes,''
  \emph{{IEEE} Trans. Inform. Theory}, vol.~43, no.~6, pp. 1767--1773, Nov.
  1997.

\bibitem{netif}
R.~Ahlswede, N.~Cai, S.~R. Li, and R.~W. Yeung, ``Network information flow,''
  \emph{{IEEE} Trans. Inform. Theory}, vol.~46, no.~4, pp. 1204--1216, July
  2000.

\bibitem{erezlattice}
U.~Erez and R.~Zamir, ``Achieving log $\frac{1}{2}\log(1 + {SNR})$ on the
  {AWGN} channel with lattice encoding and decoding,'' \emph{{IEEE} Trans.
  Inform. Theory}, vol.~50, no.~10, pp. 2293--2314, Oct. 2004.

\bibitem{DMT_MAC}
D.~N. Tse, P.~Viswanath, and L.~Zheng, ``Diversity-multiplexing tradeoff in
  multiple-access channels,'' \emph{{IEEE} Trans. Inform. Theory}, vol.~50,
  no.~9, pp. 1859--1874, Sep. 2004.

\bibitem{macratesplit}
B.~Rimoldi and R.~Urbanke, ``A rate-splitting approach to the {G}aussian
  multiple-access channel,'' \emph{{IEEE} Trans. Inform. Theory}, vol.~42,
  no.~2, pp. 364--375, Mar. 1996.

\bibitem{latticegoodfor}
U.~Erez, S.~Litsyn, and R.~Zamir, ``Lattices which are good for (almost)
  everything,'' \emph{{IEEE} Trans. Inform. Theory}, vol.~51, no.~10, pp.
  3401--3416, Oct. 2005.

\bibitem{diva}
L.~Zheng and D.~N.~C. Tse, ``Diversity and multiplexing: A fundamental tradeoff
  in multiple antenna channels,'' \emph{{IEEE} Trans. Inform. Theory}, vol.~49,
  no.~5, pp. 1073--1096, May 2003.

\bibitem{dmtbook}
R.~J. Muirhead, \emph{Aspects of Multivariate Statistical Theory}.\hskip 1em
  plus 0.5em minus 0.4em\relax John Wiley and Sons, 1982.

\end{thebibliography}

%
%
%
%
%
%


\end{document}